\newcommand{\sentence}{\Gamma}
\newcommand{\formula}{\alpha}
\newcommand{\vocabulary}{\mathcal{P}}
\newcommand{\fotwo}{\ensuremath{\mathbf{FO}^2}\xspace}
\newcommand{\ctwo}{\ensuremath{\mathbf{C}^2}\xspace}
\newcommand{\domain}{\Delta}
\newcommand{\vecn}{\mathbf{n}}
\newcommand{\fomodel}{\ensuremath{\mathcal{M}}\xspace}
\newcommand{\structure}{\mathcal{A}}
\newcommand{\usub}{\ensuremath{\structure_u}}
\newcommand{\bsub}{\ensuremath{\structure_b}}
\newcommand{\logO}{\ensuremath{\widetilde{O}}}
\newcommand{\tsentence}{\ensuremath{\widehat{\sentence}}}
\newcommand{\tvecn}{\ensuremath{\widehat{\vecn}}}
\newcommand{\targete}{\ensuremath{e^*}}
\newcommand{\substr}{\ensuremath{\mathcal{S}}}
\newcommand{\confsat}{\ensuremath{\mathsf{SAT_{\! cfg}}}}
\newcommand{\subsat}{\ensuremath{\mathsf{SAT_{\! str}}}}
\newtheorem{definition}{Definition}
\newtheorem{example}{Example}
\newtheorem{remark}{Remark}
\newtheorem{problem}{Problem}
\newtheorem{theorem}{Theorem}
\newtheorem{lemma}{Lemma}
\newtheorem{proposition}{Proposition}
\newtheorem{corollary}{Corollary}
\def\BibTeX{{\rm B\kern-.05em{\sc i\kern-.025em b}\kern-.08em
    T\kern-.1667em\lower.7ex\hbox{E}\kern-.125emX}}
\begin{document}

\title{Model Enumeration of Two-Variable Logic with Quadratic Delay Complexity
\thanks{
\textsuperscript{*}Corresponding author.

Qiaolan Meng and Juhua Pu are supported by the National Natural Science Foundation of China (62177002). OK's work was supported by the Czech Science Foundation project ``The Automatic Combinatorialist'' (24-11820S).
Qiaolan Meng and Yuyi Wang are supported by the Natural Science Foundation of Hunan Province, China (Grant No.\ 2024JJ5128).} 
}

\author{
\IEEEauthorblockN{Qiaolan Meng}
\IEEEauthorblockA{
\textit{State Key Laboratory of } \\
\textit{Software Development Environment,}\\
\textit{Beihang University}\\
\textit{\& National Research Center}\\
\textit{for Educational Materials}\\
Beijing, China \\
mengql1606@buaa.edu.cn}
\and
\IEEEauthorblockN{Juhua Pu}
\IEEEauthorblockA{
\textit{State Key Laboratory of } \\
\textit{Software Development Environment,}\\
\textit{Beihang University}\\
\textit{\& National Research Center}\\
\textit{for Educational Materials}\\
Beijing, China \\
pujh@buaa.edu.cn}
\and
\IEEEauthorblockN{Hongting Niu}
\IEEEauthorblockA{
\textit{Hangzhou International Innovation Institute}\\
\textit{\& State Key Laboratory of Software} \\
\textit{Development Environment, Beihang University}\\
Hangzhou, China \\
niuhongting@buaa.edu.cn}
\and
\IEEEauthorblockN{Yuyi Wang}
\IEEEauthorblockA{\textit{CRRC Zhuzhou Insitute}\\
\textit{\&  Tengen Intelligence Institute} \\
Zhuzhou, China \\
yuyiwang920@gmail.com}
\and
\IEEEauthorblockN{Yuanhong Wang$^*$}
\IEEEauthorblockA{\textit{School of Artificial Intelligence,} \\
\textit{Jilin University}\\
Jilin, China \\
lucienwang@jlu.edu.cn}
\and
\IEEEauthorblockN{Ondřej Kuželka}
\IEEEauthorblockA{\textit{Faculty of Electrical Engineering, } \\
\textit{Czech Technical University in Prague}\\
Prague, Czech Republic \\
ondrej.kuzelka@fel.cvut.cz}
}
\maketitle
\begin{abstract}
We study the model enumeration problem of the function-free, finite domain fragment of first-order logic with two variables (\fotwo{}). 
Specifically, given an \fotwo{} sentence $\sentence$ and a positive integer $n$, how can one enumerate all the models of $\sentence$ over a domain of size $n$?
In this paper, we devise a novel algorithm to address this problem.
The delay complexity, the time required between producing two consecutive models, of our algorithm is quadratic in the given domain size $n$ (up to logarithmic factors) when the sentence is fixed.
This complexity is almost optimal since the interpretation of binary predicates in any model requires at least $\Omega(n^2)$ bits to represent. 

\end{abstract}

\begin{IEEEkeywords}
model enumeration, first-order logic, delay complexity, satisfiability problem.
\end{IEEEkeywords}

\section{Introduction}

The enumeration problem plays a critical role in various applications, such as database query evaluation~\cite{DBLP:journals/tocl/DurandG07,durand2014enumerating}, constraint satisfaction~\cite{crawford2011hyperheuristic,schnoor2007enumerating}, and exhaustive verification~\cite{nievergelt2000exhaustive,stern1997algorithmic}.
In this paper, we study the model enumeration problem for first-order logic: Given a first-order logic sentence $\sentence$ and a positive integer $n$, the goal is to enumerate all the models of $\sentence$ over a domain of size $n$.

Model enumeration falls within the domain of finite model theory and is closely related to two other important problems in the field: \emph{model counting}~\cite{beame_symmetric_2015-1} and \emph{model sampling}~\cite{wang_exact_2023-1}. 
Given the same input, i.e., a first-order logic sentence $\sentence$ and a positive integer $n$, the model counting problem asks for the total number of distinct models of $\sentence$ over a domain of size $n$, while the model sampling problem aims to generate models uniformly at random.
For these two problems, research primarily focuses on \emph{data complexity}~\cite{beame_symmetric_2015-1}, which refers to the time required to solve the problem as a function of the input integer $n$, with the sentence $\sentence$ fixed.
Various positive and negative results w.r.t.\ the data complexity have been established for model counting and model sampling across different fragments of first-order logic.
However, similar complexity results are not available for model enumeration, which remains relatively underexplored. 

We consider the \emph{delay complexity} of model enumeration, which is the time required between producing two consecutive models.
In this paper, the delay complexity is measured in terms of the domain size $n$ with the sentence fixed, analogous to the data complexity of model counting and sampling.
This type of delay complexity also relates to the enumeration problems of combinatorial structures~\cite{DBLP:phd/ethos/Goldberg91,DBLP:journals/jmlr/RamonN09}, wherein the complexity is w.r.t.\ the size of the output while the constraints on combinatorial structures, e.g., connectivity, are fixed.


In this paper, we are particularly interested in the model enumeration problem for the fragment of first-order logic restricted to two variables (\fotwo{}).
The key motivation is that \fotwo{} is a well-studied fragment of first-order logic due to its balance of expressiveness and tractability.
The data complexity of both model counting and model sampling for \fotwo{} is polynomial in the domain size $n$ \cite{beame_symmetric_2015-1,wang_exact_2023-1}, so one might naturally expect that model enumeration for \fotwo{} could also be solvable in polynomial time with respect to $n$. 
In fact, we demonstrate that this result can be even stronger, i.e., the delay complexity can be quadratic in the domain size $n$ ignoring the logarithmic factors.

Although the data complexity of model counting and model sampling problems for \fotwo{} have been proved to be polynomial in the domain size, their complexity includes an exponential factor that depends on the number $s$ of predicates in the sentence. 
Consequently, directly applying existing techniques from counting and sampling algorithms for \fotwo{} to model enumeration would not yield an enumeration algorithm with quadratic delay.
In particular, the existing \fotwo{} sampling algorithms~\cite{wang_exact_2023-1, wang2024lifted}, the most related to our enumeration problem, use a technique called \emph{domain recursion} that samples the related atomic facts of each domain element in a recursive manner.
If we blindly applied the domain recursion technique to model enumeration, the delay complexity would not be better than $O(n^s)$, which is evidently beyond the quadratic bound for non-trivial sentences.

\subsection{Our Contributions}
  In this paper, we present a novel enumeration algorithm for \fotwo{}.
The algorithm takes inspiration from the sampling algorithm for \fotwo{}~\cite{wang_exact_2023-1} while avoiding the exponential factor.
The algorithm uses a refined domain recursion scheme, where each recursion step enumerates the atomic facts of a pair of domain elements. Each step of this scheme requires only a constant number of calls to a satisfiability oracle, whereas the sampling algorithm needs multiple calls to a computationally much more expensive model counting oracle. 
We prove that our algorithm has a \emph{quadratic delay complexity} in the domain size $n$ (ignoring the logarithmic factors introduced by arithmetic operations), where the delay contains the time to construct and output a model. This result is formally stated in the following theorem.
\begin{restatable}{theorem}{quadraticdelay}
  \label{th:quadratic_delay}
  Let $\sentence$ be an \fotwo{} sentence and $n$ be a positive integer. 
  The delay complexity of model enumeration of $\sentence$ over a domain of size $n$ is $\logO(n^2)$.
\end{restatable}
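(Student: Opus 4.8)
The plan is to analyze the backtracking enumeration procedure as a depth-first traversal of a search tree whose nodes are partial structures and whose leaves are exactly the models of $\sentence$, and to bound the delay by the product of the tree depth and the per-node cost. Following the refined domain-recursion scheme, I would fix a processing order on the domain $\domain = \{1, \ldots, n\}$ and, at successive levels of the tree, commit the atomic facts attached to one element at a time: first its $1$-type (its cell), and then, pairing it with each previously processed element, the $2$-type of that pair. Since there are $n$ cell decisions and $\binom{n}{2}$ pair decisions, each with only a constant number of options (as $|\onetypes|$ and $|\twotypes|$ are fixed once $\sentence$ is fixed), the tree has depth $\Theta(n^2)$ and constant branching.

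The central invariant to establish is that the traversal never enters a subtree devoid of models: every node the algorithm descends into is guaranteed extendable to at least one full model of $\sentence$. This is enforced by calling the extendability oracle $\subsat$ before committing to a child, so that a root-to-leaf path is followed only when it is certain to terminate in a model. Granting this invariant, the work performed between two consecutive outputs is dominated by a single walk in the tree --- backtracking upward from the previous leaf and descending to the next --- of length at most twice the depth, i.e.\ $O(n^2)$ edges; at each edge the algorithm tries only a constant number of children and issues a constant number of oracle calls, and the final emission of the model itself costs $\Theta(n^2)$.

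It therefore remains to argue that each oracle call runs in $\logO(1)$ time, which I expect to be the crux of the argument. The naive reading of $\subsat$ --- ``does this partial structure over $O(n^2)$ fixed atoms admit a completion?'' --- is as expensive as \fotwo{} satisfiability and would ruin the bound. The key is that extendability of a partial structure does not depend on the full list of committed facts but only on a compact \emph{configuration} $\config$: a bounded-dimensional vector recording, for the unprocessed elements, how many remain of each cell, together with the residual $2$-type constraints already imposed on the partially processed element. I would (i) show, using the structural characterization of \fotwo{} models from the earlier sections, that extendability reduces to the satisfiability $\confsat(\config)$ of this configuration; (ii) show that $\confsat$ is decidable in $\logO(1)$ time, the logarithmic factor arising solely from arithmetic on counts bounded by $n$; and (iii) show that $\config$ can be updated incrementally in $\logO(1)$ time whenever a single atomic fact is committed, so that no recomputation over the full partial structure is ever needed.

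Assembling these pieces gives the delay bound: the between-outputs walk visits $O(n^2)$ nodes, each incurring $\logO(1)$ oracle and update cost, and the model is emitted in $\Theta(n^2)$, for a total of $\logO(n^2)$. The main obstacle is squarely steps (i)--(iii) above --- collapsing the apparently structure-level test $\subsat$ to the configuration-level test $\confsat$ and maintaining the configuration incrementally --- since this is exactly what removes any polynomial-in-$n$, and in particular the exponential-in-$s$, dependence that a direct application of the sampling-based domain recursion would otherwise incur.
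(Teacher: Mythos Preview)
Your proposal is correct and follows essentially the same approach as the paper: backtracking over a domain-recursion tree of depth $\Theta(n^2)$, pruning via an extendability oracle that is reduced to a bounded-dimensional configuration test $\confsat$ solvable in $\logO(1)$, with the configuration maintained incrementally. The one organizational difference is that the paper separates the enumeration into two phases---first enumerating all $\domain$-consistent \emph{unary} substructures (via satisfiable configurations and domain partitions), then running the pair-by-pair recursion on 2-types---rather than interleaving the 1-type and 2-type commitments as you describe; this two-phase split is what makes your own description of the configuration (``for the unprocessed elements, how many remain of each cell'') coherent, since in the interleaved scheme the unprocessed elements have no cell yet, and it also keeps the reduction $\subsat\to\confsat$ cleaner because every substructure reaching the oracle already contains a full unary part.
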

The complexity that we obtain is \emph{almost optimal} since the interpretation of binary predicates over a domain of size $n$ inherently requires at least $\Omega(n^2)$ bits to represent.
We further extend our enumeration algorithm to handle the \fotwo{} sentence with \emph{the equality} relation, and show that the delay complexity remains quadratic in the presence of the equality.
\begin{theorem}
  \label{th:quadratic_delay_equality}
  Let $\sentence$ be an \fotwo{} sentence with the equality predicate and $n$ be a positive integer. 
  The delay complexity of model enumeration of $\sentence$ over a domain of size $n$ is $\logO(n^2)$.
\end{theorem}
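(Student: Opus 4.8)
The plan is to reduce the equality case to the equality-free case of Theorem~\ref{th:quadratic_delay} by observing that, over a fixed finite domain, the equality symbol is not a free relation to be enumerated but is pinned to the identity relation: for domain elements $a,b$, the atom $a = b$ is true exactly when $a$ and $b$ are the same element. I would first recall the representation used by the enumeration algorithm of Theorem~\ref{th:quadratic_delay}, in which a model is described by assigning a $1$-type (cell) to each domain element and a $2$-table to each ordered pair of distinct elements. The crucial feature is that this decomposition already separates the \emph{diagonal} information (properties of a single element, including reflexive binary atoms) from the \emph{off-diagonal} information (properties of distinct pairs). Equality fits exactly into this split: its truth value is $\top$ on the diagonal and $\bot$ off the diagonal, and nowhere does it introduce a genuine degree of freedom.

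Building on this, the reduction proceeds by \emph{pinning} every occurrence of the equality atom to its forced value before running the algorithm. Concretely, when precomputing the set of admissible cells, I would set every equality atom on the reflexive slot to $\top$; when precomputing the set of admissible $2$-tables, I would set every equality atom to $\bot$. The constraint checks that the Theorem~\ref{th:quadratic_delay} algorithm already performs on these fixed sets are then carried out with $=$ replaced by these constants, so that a structure assembled from pinned cells and $2$-tables satisfies $\sentence$ under the identity interpretation of $=$ if and only if the corresponding equality-free checks succeed. The correctness argument amounts to verifying that this substitution reproduces the semantics of identity for each syntactic position of $=$ in $\sentence$, in both the universal ($\forall x \forall y$) and existential ($\forall x \exists y$) parts of the normal form.

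For the complexity bound, I would argue that pinning only \emph{restricts} the (sentence-dependent, hence $n$-independent) collections of cells and $2$-tables and alters the fixed-size satisfiability checks by a constant amount; all of this is precomputed once and costs $O(1)$ in the domain size $n$. The refined domain recursion, the constant number of satisfiability-oracle calls per recursion step, and the per-model output cost are otherwise untouched, so invoking Theorem~\ref{th:quadratic_delay} on this modified but structurally identical procedure yields the same $\logO(n^2)$ delay.

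I expect the main obstacle to be the correctness of the pinning for the existential constraints $\forall x \exists y\, \fotwoformula(x,y)$, where equality may force a witness to be \emph{distinct} from the current element. The subtle point is that a self-witness ($y = x$) is legitimate exactly when the diagonal constraint (with $=$ pinned to $\top$) holds, whereas a distinct witness corresponds to an off-diagonal $2$-table (with $=$ pinned to $\bot$); the witness bookkeeping of the domain recursion must route these two cases correctly so that no spurious model is produced and none is omitted. A secondary point to check is that carrying the pinned equality atom through the satisfiability checks does not increase the number of oracle calls per recursion step beyond a constant factor, which is what preserves the quadratic delay.
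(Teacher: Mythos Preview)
Your pinning idea correctly handles the \emph{type} side of the story: forcing $x=x$ to $\top$ in every $1$-type and $x=y,y=x$ to $\bot$ in every $2$-type is exactly how equality is absorbed into the compatibility checks. But this is not where the difficulty lies, and simply invoking Theorem~\ref{th:quadratic_delay} on the pinned data does not go through.

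The quadratic delay rests on the $\logO(1)$ solution to the configuration decision problem (Corollary~\ref{cor:configuration_decision}), which in turn rests on the monotonicity property of Proposition~\ref{le:configuration_monotonicity}: if $\vecn$ is satisfiable and $n_i>0$, then increasing $n_i$ preserves satisfiability. The proof of that proposition adds a fresh element $e$ of $1$-type $\tau_i$ and assigns \emph{some} $2$-type between $e$ and the existing $\tau_i$-element $e'$; in the equality-free case such a compatible $2$-type always exists. With equality pinned to $\bot$ off the diagonal, this step can fail: take $\phi$ containing $\tau_i(x)\wedge\tau_i(y)\rightarrow x=y$. Then $n_i=1$ may be satisfiable while $n_i=2$ is not, because no $2$-type with $x\neq y$ is compatible with $\{\tau_i,\tau_i\}$. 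So monotonicity under the original derivation relation breaks, the template-configuration argument (Theorem~\ref{th:1-type_configuration_satisfiability}) collapses, and you no longer have a constant-time consistency check inside $\mathsf{PairRecursion}$. Your ``main obstacle'' paragraph worries about witness routing for $\forall x\exists y$, but that part is actually fine; the genuine obstacle is the universal part and the configuration monotonicity it controls.

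The paper's fix is precisely at this point: it redefines the derivation relation (Definition~\ref{def:configuration_derivation_with_equality}) so that a coordinate may only grow when $n_i>1$ rather than $n_i>0$. The reason this repairs monotonicity is that once two distinct $\tau_i$-elements coexist in a model, a $2$-type compatible with $\{\tau_i,\tau_i\}$ and containing $x\neq y$ is already realized, so the cloning construction can reuse it. With this modified derivation relation, Propositions~\ref{le:configuration_monotonicity} and~\ref{le:upper_bound} go through unchanged, the template configurations now range over $\{0,1,\dots,\delta\}^{|U|}$ with the $0/1$ vs.\ $>1$ distinction, and the rest of Algorithm~\ref{alg:recursive_2} is untouched. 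Your proposal is missing exactly this redefinition and the argument that it restores monotonicity.
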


In addition, we implement our enumeration algorithm and conduct experiments on several benchmark \fotwo{} sentences. 
The result demonstrates that our algorithm achieves consistently low delay and stable performance, outperforming the All-SAT-based enumeration method.
The code as well as the experimental results are available at \url{https://github.com/MengQiaolan/model\_enumeration\_fo2}.

\subsection{Related Work}

There has been extensive research on the finite model theory.
One of the core problems is the satisfiability problem~\cite{Scott1962}, which seeks to determine whether a given formula has a finite model.
The satisfiability problem of \fotwo{}, denoted as $\SAT(\fotwo)$, is known to be decidable~\cite{mortimer1975languages,1997decision}, and its exact complexity is NEXPTIME-complete~\cite{Scott1962}.
Several approaches have been proposed to solve $\SAT(\fotwo)$ efficiently~\cite{de2001resolution,tonytang2021towards}.
Beyond satisfiability, model counting and model sampling are two closely related problems to model enumeration that have also been studied in the literature. 
While model counting and sampling in general first-order logic are
intractable (under plausible complexity-theoretic assumptions), there exist algorithms for \fotwo{} that solve both problems in polynomial time with respect to the domain size~\cite{van2021faster, wang2024lifted}.
In particular, the scope of counting tractability has been extended to the fragments beyond \fotwo{}, such as \ctwo~\cite{kuzelka2021weighted}, $\mathbf{S^2FO^2}$~\cite{kazemi2017domain}, $\mathbf{S^2RU}$~\cite{kazemi2017domain}, etc.
There is also literature on extending the sampling tractability to \ctwo{}~\cite{wang2024lifted} and tree axiom~\cite{wang_domain-lifted_2019}.

Enumeration has been widely studied in several fields. 
A relevant prior work to ours is the model enumeration for propositional logic, also called All-SAT (All Satisfying Assignments).
Classical All-SAT solvers~\cite{yu2014all,toda2016implementing} achieve enumeration by iteratively calling SAT solvers to find a model and adding blocking clauses to the given formula to rule out the found models.
This approach is typically inefficient since an exponential number of blocking clauses may be added to the formula.
Then, more efficient algorithms have been developed, such as those based on the DPLL~\cite{nieuwenhuis2006solving} or CDCL~\cite{heule2011cube} solvers, which avoid explicit blocking clauses by some sophisticated techniques.

The quadratic delay complexity obtained in this paper makes the enumeration problem of \fotwo{} fall within the class of \emph{fixed-parameter tractable problems} (\FPT{}) when the sentence size is treated as a parameter to the problem.

\section{Preliminaries}

We first introduce some basic concepts and notations used in this paper.

The partitions used in this paper are all ordered partitions, i.e., the order of the parts matters.
We use $[n]$ to denote the set $\{1,2,\dots,n\}$ for any positive integer $n$.
We also use the bold symbol $\vecn$ to denote a vector $(n_1,\dots,n_k)$, and $|\vecn|$ to denote the summation of the elements in $\vecn$, that is, $|\vecn| = \sum_{i=1}^k n_i$.

\subsection{First-Order Logic}

This paper focuses on the \emph{function-free} and \emph{finite-domain} fragment of first-order logic.
An \emph{atom} of arity $k$ is expressed as $P(x_1,\dots,x_k)$, where $P/k$ is a predicate from a given predicate vocabulary, and $x_1,\dots,x_k$ are logical variables or constants.
A \emph{literal} is either an atom or its negation.
A \emph{formula} is inductively defined as an atom, the negation of another formula, or the conjunction or disjunction of two formulas.
Additionally, a formula may be enclosed by one or more quantifiers, such as $\forall x$ or $\exists x$, and still be considered a formula.
A logical variable in a formula is referred to as \emph{free} if it is not bound by any quantifier.
A formula that contains no free variables is termed a \emph{sentence}.
The predicate vocabulary of a formula $\formula$ is denoted as $\mathcal{P}_\formula$.
A formula is said to be \emph{ground} if it contains no variables.
A \emph{structure} $\structure$ of a predicate vocabulary $\mathcal{P}$ is a tuple $(\domain, f)$, where $\domain$ is a \emph{finite} domain and $f$ is a mapping from predicates $P/k$ to $\domain^k$.
In this paper, a structure is often written as a set of ground literals interpreted by $f$: $\structure = \bigcup_{P/k \in \mathcal{P}_\formula} \{P(a_1,\dots,a_k) \mid (a_1,\dots,a_k) \in f(P)\} \cup \{\neg P(a_1,\dots,a_k) \mid (a_1,\dots,a_k) \notin f(P)\}$.
The satisfaction relation $\models$ is defined by the standard semantics of first-order logic.
A \emph{model} \fomodel is a structure that satisfies a given formula.

\subsection{Types, Substructures and Configuration}
\label{sec:types_substructures_config}

This paper focuses on the fragment of first-order logic restricted to two logical variables, denoted as \fotwo{}.
Any \fotwo{} sentence (without the equality predicate) can be transformed into the \emph{Scott normal form (SNF)}~\cite{Scott1962}
\begin{equation}
  \sentence = \forall x \forall y: \phi(x,y) \land \bigwedge_{k\in[m]} \forall x\exists y: \beta_k(x,y),
  \label{eq:snf}
\end{equation}
where $\phi(x,y)$ are quantifier-free formulas and $\beta_k$ are binary predicates.
The transformation preserves the models, that is, there is a bijective mapping between the models of the original sentence and those of its SNF sentence~\cite[Appendix A.1]{wang2024lifted}.
Specifically, for each model of the SNF sentence, we can obtain the corresponding model of the original sentence by removing and replacing the literals of the auxiliary predicates, and the complexity of this process is linear in the size of the sentence~\cite{1997decision}.
Thus enumerating models of an \fotwo{} sentence is equivalent to enumerating models of its SNF sentence.
In the rest of this paper, we assume that the \fotwo{} sentences have been transformed into the SNF.

Next, we need two important concepts: \emph{1-types} and \emph{2-types}.

\begin{definition}[Types]
  \label{def:types}
  Given a predicate vocabulary $\vocabulary$, a \emph{1-type} $\tau$ of $\vocabulary$ is a maximally consistent set of literals formed by $\vocabulary$ involving a \emph{single} variable $x$.\footnote{Here, a set of literals is \emph{maximally consistent} if it does not contain any contradictory literals and cannot be extended by adding any other literal.}
  A \emph{2-type} $\pi$ is a maximally consistent set of literals formed by $\vocabulary$ each involving \emph{two} variables $x$ and $y$.\footnote{Note that while the definition of the 1-type here is consistent with standard definitions in logic literature, the definition of the 2-type diverges from the usual one. Specifically, our 2-type does not include information about the 1-types of its two constituent variables.}
  A \emph{type} is either a 1-type or a 2-type.
\end{definition}
The 1-types and 2-types of an \fotwo sentence $\sentence$ are defined as that of its predicate vocabulary $\mathcal{P}_\sentence$, which we denote by $U_\sentence$ and $B_\sentence$, respectively.
When the context is clear, the subscript $\sentence$ is omitted.
For convenience, a type is often identified with the conjunction of its constituent literals, expressed as a formula $\tau(x)$ or $\pi(x,y)$.
Its ground $\tau(e)$ or $\pi(e,e')$ is also viewed interchangeably as a set of ground literals for some constants $e$ and $e'$.
A 1-type $\tau$ is \emph{compatible with} $\sentence$ if $\tau(e) \models \phi(e,e)$ for some constant $e$.
A 2-type $\pi$ is \emph{compatible with} $\sentence$ and 1-type pair $\{\tau_i, \tau_j\}$ if $\tau_i(e) \land \tau_j(e') \land \pi(e,e') \models \phi(e,e')$ for some constants $e$ and $e'$.
Similarly, we omit the notation $\sentence$ and simply say that a 1-type is compatible, and a 2-type is compatible with a pair of 1-types, when the context is clear.

In a structure $\structure$, an element $e \in \domain$ is said to \emph{realize} the 1-type $\tau$ if $\structure \models \tau(e)$.
Obviously, every element realizes exactly one 1-type.
We say that a pair of elements $\{e, e'\}$ realizes the 2-type $\pi$ if $\structure \models \pi(e, e')$.
Similarly, every pair of elements realizes exactly one 2-type.

\begin{example}
  \label{ex:types}
  Consider the \fotwo{} sentence that describes 2-colored graphs:
  \begin{align*}
    \sentence_{col} = \ &\forall x: (\neg R(x) \lor \neg B(x)) \land (R(x) \lor B(x)) \land \\
    &\forall x \forall y: E(x, y) \rightarrow (R(x) \land B(y)) \lor (B(x) \land R(y)) \land \\
    &\forall x \forall y: E(x, y) \rightarrow E(y, x),
  \end{align*}
  where the unary predicates $R$ and $B$ represent the red and blue colors, respectively.
  There are $2^3$ 1-types, but only two of them are compatible with $\sentence_{col}$ : $\tau_1(x) = R(x) \land \neg B(x) \land \neg E(x,x)$ and $\tau_2(x) = \neg R(x) \land B(x) \land \neg E(x,x)$.
  Similarly, there are two compatible 2-types: $\pi_1(x, y) = E(x,y) \land E(y,x)$ and $\pi_2(x, y) = \neg E(x,y) \land \neg E(y,x)$.
  For the structure (actually a model) over a domain $\domain = \{e_1, e_2\}$:
  \begin{gather*}
    \{ R(e_1), \neg B(e_1), \neg E(e_1,e_1), \neg R(e_2), B(e_2), \neg E(e_2,e_2), \\
    \neg E(e_1,e_2), \neg E(e_2,e_1) \},
  \end{gather*}
  the elements $e_1$ and $e_2$ realize the 1-types $\tau_1$ and $\tau_2$, respectively, and the pair $\{e_1, e_2\}$ realizes the 2-type $\pi_2$. 
\end{example}

Let $\mathcal{A}$ be a structure.
We call a subset $S$ of $\mathcal{A}$ a \emph{substructure} of $\mathcal{A}$.
A substructure $\substr$ is said to be $\domain$-\emph{consistent} w.r.t. $\sentence$ if $\substr$ is a substructure of a model of $\sentence$ over $\domain$.
Based on the types, we can define two special substructures: the \emph{unary substructure} and the \emph{binary substructure}.
A \emph{unary substructure} $\usub$ is a union of all ground 1-types in $\structure$, and a \emph{binary substructure} $\bsub$ is a union of all ground 2-types in $\structure$.
A unary substructure and a binary substructure essentially represent a partition of a structure.
For instance, $\{R(e_1), \neg B(e_1), \neg E(e_1,e_1), B(e_2), \neg R(e_2), \neg E(e_2,e_2)\}$ and $\{\neg E(e_1, e_2), \neg E(e_2, e_1)\}$ are the unary substructure and binary substructure of the structure in \Cref{ex:types}, respectively, and they form a partition of the structure.
It means that enumerating a model can be achieved by enumerating its unary substructure and binary substructure consecutively.

Given an \fotwo{} sentence $\sentence$ with  1-types $\tau_1, \tau_2, \dots, \tau_{|U|}$, we define its \emph{1-type configuration} (or \emph{configuration} for brevity) $\vecn=(n_1, n_2, \dots, n_{|U|})$ as a vector of non-negative integers.
Each integer $n_i\ge 0$ represents the number of elements realizing $\tau_i$, and is also called the \emph{cardinality} of $\tau_i$.
Every model of $\sentence$ over a domain of size $|\vecn|$ corresponds to a unique configuration, though configurations may generally correspond to multiple models.
For instance, the configuration of the model in \Cref{ex:types} is $(1, 1)$.
By swapping the 1-types of the two elements in the model, we obtain a different model that shares the same configuration.
It is also possible for a configuration to have zero models, e.g., if we add the formula $\forall x \forall y \!:\! \neg (R(x) \land B(y))$ to the sentence in \Cref{ex:types}, the configuration $(1, 1)$ will no longer correspond to any model.
We call such configurations \emph{unsatisfiable}, otherwise, they are \emph{satisfiable}.

\section{Configuration Decision Problem}
\label{sec:foundational_lemmas}

In this section, we study the \emph{configuration decision problem}, which serves as a foundation for our enumeration algorithm.
We formally define it as follows:
\begin{problem}[Configuration Decision Problem]
  Fix an \fotwo{} sentence $\sentence$, 
  the \emph{configuration decision problem} $\confsat(\sentence, \vecn)$ asks whether a configuration $\vecn$ is satisfiable with respect to the sentence $\sentence$.
\end{problem}
As we will show in \Cref{sec:unary_substructure,sec:binary_substructure}, our enumeration algorithm incrementally extends a substructure to a complete model, and it is important to check whether the current substructure is consistent with the sentence.
The consistency check is essentially equivalent to a configuration decision problem as shown in \Cref{sec:from_substructure_to_configuration}.

The configuration decision problem is different from the classical decision problem for \fotwo{}, which asks whether there exists a model of a given sentence.
It is also different from the \emph{spectrum membership problem}~\cite{jones1972turing}, whose goal is to decide whether the input integer $n$ is in the spectrum of a given sentence, i.e., whether there exists a model of the sentence over a domain of size $n$.
In fact, there exists a hierarchy of reductions among these three problems:
due to the \emph{finite model property}~\cite{finitemodel} of \fotwo{}, the decision problem can be reduced to the spectrum membership problem, and the spectrum membership problem can be solved by the configuration decision problem (the domain size $n$ is in the spectrum if and only if there exists a satisfiable configuration $\vecn$ such that $|\vecn| = n$).

The configuration decision problem is more closely related to the spectrum membership problem, and our approach to solving the configuration decision problem is partially inspired by its solution.
There are two key properties of the spectrum membership problem that we generalize to the configuration decision problem for \fotwo : 
\begin{itemize}
  \item Monotonicity of Satisfiability (MS)~\cite{jones1972turing}: If a domain size $n$ is in the spectrum, then any domain size $n' > n$ is also in the spectrum;
  \item Exponential Size Model (ESM)~\cite{1997decision}: Every satisfiable \fotwo{} sentence has a model over a domain of size at most $3s2^r$, where $s$ is the size of the sentence and $r$ is the number of predicates in the sentence.
\end{itemize}
By these two properties, the spectrum membership problem can be solved in $\logO(1)$.
The idea is to first find the minimal domain size $n^*\le 3s2^r$ in the spectrum, if it exists, and then check whether the given domain size $n$ is less than or equal to $n^*$, otherwise directly output ``no''.

Our approach follows a similar approach, but uses generalized properties specific to the configuration decision problem.
More precisely, we define the \emph{derivation relation} as the partial order of configurations:
\begin{definition}[Derivation Relation]
  \label{def:configuration_derivation}
  Consider an \fotwo{} sentence $\sentence$ and its configuration $\vecn = (n_1, \dots, n_{|U|})$.
  We say that we can \emph{derive} a configuration $\vecn' = (n_1', \dots, n_{|U|}')$ from $\vecn$ if
  \begin{align*}
    \left\{
    \begin{aligned}
    &n_i' \ge n_i, \ \text{if } n_i > 0\\
    &n_i' = n_i, \ \text{otherwise}
    \end{aligned}
    \right.
  \end{align*}
  for all $i \in [|U|]$, and we denote this as $\vecn \preccurlyeq \vecn'$.
\end{definition}
Then we show that the MS as well as ESM properties also hold for the configuration decision problem in terms of the derivation relation, which is formalized as \cref{le:configuration_monotonicity} and \cref{le:upper_bound}, respectively.

\begin{restatable}{proposition}{configurationmonotonicity}
  \label{le:configuration_monotonicity}
  Consider an \fotwo{} sentence $\sentence$. 
  For any configuration $\vecn$ of $\sentence$, if $\vecn$ is satisfiable, then every configuration $\vecn'$ of $\sentence$ such that $\vecn \preccurlyeq \vecn'$ is also satisfiable.
\end{restatable}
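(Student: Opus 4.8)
The plan is to prove this \emph{monotonicity of satisfiability} property by an explicit \emph{element-cloning} construction: starting from a model \fomodel of \sentence{} whose 1-type configuration is \vecn{}, I build a model of \sentence{} whose configuration is any prescribed $\vecn'$ with $\vecn \preccurlyeq \vecn'$, by duplicating existing domain elements. The key observation is that $\vecn \preccurlyeq \vecn'$ forces $n_i' = 0$ whenever $n_i = 0$ and only allows $n_i' \ge n_i$ when $n_i > 0$, so every \emph{new} element must realize a 1-type $\tau_i$ that is \emph{already realized} in \fomodel{}. This is precisely what makes duplication feasible and explains why the derivation relation forbids activating a previously empty 1-type.

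Concretely, for each 1-type $\tau_i$ with $n_i > 0$ I would fix a representative element $e_i$ realizing $\tau_i$ in \fomodel{} and introduce $n_i' - n_i$ fresh \emph{clones} of $e_i$. Each clone $c$ of $e_i$ is declared to realize $\tau_i$, and its 2-types are copied from $e_i$: for every original element $d \neq e_i$ I set the 2-type of $\{c,d\}$ equal to that of $\{e_i,d\}$, and for two clones coming from distinct representatives $e_i, e_j$ with $i \neq j$ I copy the 2-type of $\{e_i,e_j\}$. Because $\phi$ is quantifier-free, the truth value of $\phi(a,b)$ depends only on the 1-types of $a$ and $b$ and the 2-type of $\{a,b\}$; hence all these copied pairs inherit $\phi$ from the corresponding pair of \fomodel{}, which already satisfied it. The existential requirements are inherited likewise: if $e_i$ witnessed $\forall x \exists y\, \beta_k(x,y)$ through some $y \neq e_i$, the clone witnesses it through the same $y$ since the copied 2-type carries $\beta_k$; and if $e_i$ was its own witness, then $\beta_k(e_i,e_i)$ is part of $\tau_i$, so $\beta_k(c,c)$ holds as well. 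The original elements retain all of their witnesses, because no 2-type among them is altered.

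The one delicate point, which I expect to be the main obstacle, is the 2-type to assign \emph{between a clone and its own representative}, and between two clones of the same $e_i$: here both endpoints realize the \emph{same} 1-type $\tau_i$, so I cannot simply copy the 2-type of a pair of distinct original elements, since such a same-type pair need not exist when $n_i = 1$. To resolve this I would show that every compatible 1-type $\tau$ admits a \emph{self-compatible} 2-type, namely the ``diagonal'' 2-type $\pi_\tau$ obtained by setting, for each binary predicate $P$, both $P(x,y)$ and $P(y,x)$ to the truth value that $P(x,x)$ takes in $\tau$. Since $\tau$ is compatible there is an element $e$ with $\tau(e) \models \phi(e,e)$; under $\tau(x) \wedge \tau(y) \wedge \pi_\tau(x,y)$ every atom of $\phi(x,y)$ --- the unary atoms on $x$ and $y$, the reflexive atoms $P(x,x)$ and $P(y,y)$, and the cross atoms $P(x,y)$ and $P(y,x)$ --- takes exactly the value of the corresponding atom of $\phi(e,e)$, so $\phi(x,y)$ evaluates as $\phi(e,e)$, which holds; the symmetric choice of cross atoms makes $\phi(y,x)$ hold as well. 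Assigning $\pi_\tau$ to every same-type clone pair therefore preserves the universal constraint.

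Finally I would check the remaining corner cases: the reflexive pairs $(c,c)$ satisfy $\phi$ because $\tau_i$ is compatible, and the resulting structure has exactly $n_i'$ realizers of each $\tau_i$, i.e.\ configuration $\vecn'$ over a domain of size $|\vecn'|$, which establishes \cref{le:configuration_monotonicity}. The whole construction can equivalently be phrased as adding a single clone at a time and invoking the argument inductively, which keeps the verification of $\phi$ and of the existential witnesses entirely local and routine.
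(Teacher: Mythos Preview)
Your proof is correct and follows essentially the same element-cloning-by-induction approach as the paper: pick an existing element $e'$ of the target 1-type, add a fresh element $e$ realizing the same 1-type, copy $e'$'s 2-types to $e$, and choose a compatible 2-type for the pair $\{e,e'\}$. The only difference is that the paper simply \emph{asserts} that a 2-type compatible with $\{\tau_i,\tau_i\}$ always exists (remarking that this uses the absence of equality), whereas you supply the explicit ``diagonal'' 2-type $\pi_\tau$ and verify it against $\phi(e,e)$ --- a small but genuine addition that fills in the one step the paper leaves implicit.
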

We prove this proposition by induction.
For any satisfiable configuration $\vecn = (n_1, \dots, n_{|U|})$ and any configuration $\vecn' = (n_1, \dots, n_i + 1, \dots, n_{|U|})$ such that $n_i > 0$, we show that we can construct a model $\fomodel'$ of $\sentence$ with configuration $\vecn'$ from a model $\fomodel$ with configuration $\vecn$.
The construction is achieved by adding a new element $e$ realizing $\tau_i$ to the model $\fomodel'$ with all other elements and their 1-types and 2-types in $\fomodel$ intact.
The 2-types realized by $e$ and other elements are copied from another element realizing the same 1-type $\tau_i$ (which always exists by $n_i > 0$).
Then we show that $\fomodel'$ is also a model of $\sentence$.
The details of the proof are omitted here and can be found in the appendix.

We can also generalize the ESM property to the configuration decision problem.
Formalizing this, we use the notation $\vecn[i\mapsto k]$ to denote a new configuration that is identical to $\vecn$ except the $i$-th value replaced by $k$.
We then consider the \emph{universal bound} for the satisfiability of configurations.
\begin{restatable}{proposition}{upperbound}
  \label{le:upper_bound}
  Consider an \fotwo{} sentence $\sentence$. There is an upper bound 
  \begin{equation*}
    \delta=\max\{m(m+1), 2m+1\},
  \end{equation*}
  where $m$ is the number of formulas containing existential quantifiers in $\sentence$, such that for any configuration $\vecn = (n_1, \dots, n_{|U|})$ with $n_i > \delta$, if $\vecn$ is satisfiable, then $\vecn[i\mapsto \delta]$ is also satisfiable.
\end{restatable}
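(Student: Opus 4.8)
The plan is to prove the statement constructively. Starting from a model $\fomodel$ of $\sentence$ that realizes $\vecn$, I will build a model $\fomodel'$ realizing $\vecn[i\mapsto\delta]$ by keeping every element whose 1-type is not $\tau_i$, keeping exactly $\delta$ of the $n_i>\delta$ elements of 1-type $\tau_i$, deleting the remaining $n_i-\delta$ of them, and then repairing the existential witnesses of the surviving elements. The observation that makes deletion safe is that the universal conjunct $\forall x\forall y\,\phi(x,y)$ can never be violated by passing to a sub-domain, and that it is preserved under the following primitive: if a pair of elements of $\fomodel$ with ordered 1-types $(\tau_a,\tau_b)$ realizes a 2-type $\pi$, then overwriting the 2-type of any surviving pair that has the same ordered pair of 1-types with $\pi$ keeps $\phi$ satisfied and leaves every 1-type (hence the configuration) intact, since $\phi$-consistency of a pair depends only on the 1-types of its endpoints and the 2-type between them.

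After deletion, the only way $\fomodel'$ can fail to be a model is that some surviving element $e$ loses, for some existential conjunct $\forall x\exists y\,\beta_k(x,y)$, its witness $w_k(e)$; this happens exactly when $w_k(e)$ was one of the deleted $\tau_i$-elements. There are two kinds of such obligations: internal ones, where $e$ is itself a surviving $\tau_i$-element, and external ones, where $e$ has some other 1-type. In both cases the lost witness had 1-type $\tau_i$, so the 2-type between $e$ and $w_k(e)$ is a realized $(\tau(e),\tau_i)$-type in which $\beta_k$ holds from $e$ to its witness. I will repair the obligation by rerouting it onto a surviving $\tau_i$-element $a$ and copying this witnessing 2-type onto the pair $(e,a)$, which by the primitive above installs a valid witness while preserving $\phi$.

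The heart of the argument, and the main obstacle, is to carry out all these rerouting steps simultaneously without conflict: overwriting the 2-type of a pair to install one witness must not destroy another witness already installed on, or depended upon by, the same pair. I will control this in two stages. For the internal obligations I reserve a cyclic core of surviving $\tau_i$-elements $g_0,\dots,g_{t-1}$ and let $g_j$ take its witness for the $l$-th conjunct on the pair $\{g_j,g_{j+l}\}$ (indices mod $t$); a short counting shows that when $t\ge 2m+1$ the pairs used by distinct obligations are pairwise distinct and the two directions never collide, so every core element receives all $m$ witnesses from realized $(\tau_i,\tau_i)$-types. For the external obligations I redirect the lost $\tau_i$-witnesses of non-$\tau_i$ elements onto a reservoir of survivors, using a dedicated target per conjunct so that each overwritten pair carries a single obligation and no reservoir element depends for its own witnesses on a pair that gets overwritten; bounding the reservoir so that this stays conflict-free across all $m$ conjuncts is what forces the term $m(m+1)$. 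Since the internal core needs at least $2m+1$ elements and the external repair can be accomplished with $m(m+1)$ survivors, and the core may be reused inside the reservoir, keeping $\delta=\max\{m(m+1),2m+1\}$ elements of 1-type $\tau_i$ suffices for the whole scheme.

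It then remains to verify that $\fomodel'$ is a model: $\phi$ holds on every pair because every 2-type of $\fomodel'$ is either inherited from $\fomodel$ or copied from a realized type of the matching ordered 1-type pair, and each of the $m$ existential conjuncts is witnessed for every surviving element by construction; since only the cardinality of $\tau_i$ changed, and it is now $\delta$, the configuration of $\fomodel'$ is $\vecn[i\mapsto\delta]$. I expect the conflict-free scheduling of the rewiring to be the delicate part, and in particular the bookkeeping guaranteeing that the external redirections neither clobber the internal core's witnesses nor one another is exactly where the precise value of $\delta$ must be pinned down.
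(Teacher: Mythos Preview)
Your proposal is correct and follows essentially the same approach as the paper: a cyclic arrangement on at least $2m+1$ surviving $\tau_i$-elements to repair internal witnesses, together with redirecting the lost external witnesses onto a small reservoir whose conflict resolution accounts for the $m(m+1)$ term. The only cosmetic difference is that the paper builds the new $\tau_i$-block bottom-up (assembling at most $m(m+1)$ elements for the external repair, then padding to exactly $\delta$ via the monotonicity proposition, then performing the cyclic internal repair) rather than deleting down to $\delta$ survivors first, and it spells out explicitly the conflict-resolution bookkeeping you correctly flag as the delicate step.
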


\begin{remark}
  Note that the size $|U|$ of a configuration is less than or equal to $2^{r}$, where $r$ is the number of predicates in the sentence.
  The above lemma implies that every satisfiable \fotwo{} sentence has a model of size at most $\delta \cdot 2^r$, providing an analogous ESM guarantee. 
\end{remark}

The basic proof idea of this proposition is similar to the proof of \cref{le:configuration_monotonicity}: We construct a model $\fomodel'$ of $\sentence$ with configuration $\vecn[i\mapsto \delta]$ from a model $\fomodel$ with configuration $\vecn$.
We only sketch the proof here and refer the reader to the appendix for the details.

For a given structure $\structure$, we say that an element $e$ is the \emph{(Skolem) witness} of an element $e'$ with respect to $\beta_k$ for some $k\in[m]$ if the literal $\beta_k(e',e)$ holds in $\structure$.
The element $e$ is also called a \emph{$\beta_k$-witness} of $e'$.
We define the \emph{necessary witness set} of an element $e$ in $\structure$ as a set of witnesses for $e$ that meets the following conditions:
\begin{itemize}
  \item \textbf{Coverage}: for each predicate $\beta_k$, there is at least one $\beta_k$-witness of $e$ in this set.
  \item \textbf{Minimality}: removing any witness from this set results in a set that no longer satisfies the coverage condition.
\end{itemize}
Every element in a model has at least one necessary witness set. 
The size of any necessary witness set for an element is at most $m$, since at most one witness is needed for each $\beta_k$.

Now we are ready to present the construction of a model $\fomodel'$ with configuration $\vecn[i\mapsto \delta]$.
We first copy $\fomodel$ to $\fomodel'$ with all elements realizing the 1-type $\tau_i$ removed.
Then we construct the set $C_i$ of elements realizing the 1-type $\tau_i$ in $\fomodel'$ such that every element in the domain of $\fomodel'$ has a necessary witness set, which means that $\fomodel'$ satisfies the existentially quantified formulas.
We prove that at most $m(m+1)$ elements in $C_i$ are sufficient for all other elements not in $C_i$ to have necessary witness sets.
Meanwhile, there are at most $2m+1$ elements needed to make the elements in $C_i$ also have necessary witness sets.
Thus the total number of elements in the constructed $C_i$ is at most $\delta = \max\{m(m+1), 2m+1\}$.
When constructing $C_i$, we ensure that the 2-types between elements are compatible with the respective 1-types.
Thus $\fomodel'$ is a model of $\sentence$ with configuration $\vecn[i\mapsto \delta]$, leading to the conclusion.

Combining \cref{le:configuration_monotonicity,le:upper_bound}, we have the following theorem, whose proof follows directly from Lemma~\ref{le:configuration_monotonicity} and Lemma~\ref{le:upper_bound}.
\begin{theorem}
  \label{th:1-type_configuration_satisfiability}
  Let $\sentence$ be an \fotwo{} sentence.
  A configuration $\vecn$ is satisfiable if and only if there exists a satisfiable configuration $\vecn' \in  \{0, 1, \dots, \delta\}^{|U|}$ such that $\vecn' \preccurlyeq \vecn$, where $\delta$ is the upper bound defined in Proposition~\ref{le:upper_bound}.
\end{theorem}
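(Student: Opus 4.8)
The plan is to establish the biconditional by treating the two directions separately and combining the two propositions proved above. The backward direction ($\Leftarrow$) is immediate from monotonicity: if there is a satisfiable configuration $\vecn' \in \{0,1,\dots,\delta\}^{|U|}$ with $\vecn' \preccurlyeq \vecn$, then applying \Cref{le:configuration_monotonicity} with $\vecn'$ as the satisfiable configuration yields at once that $\vecn$ is satisfiable. No further work is needed here, so the content of the theorem lies entirely in the forward direction.

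For the forward direction ($\Rightarrow$), assume $\vecn$ is satisfiable, and I would exhibit the witness explicitly as the \emph{clamped} configuration $\vecn'$ defined coordinatewise by $n_i' = \min\{n_i, \delta\}$ for all $i \in [|U|]$. Three properties then have to be verified. First, $\vecn' \in \{0,1,\dots,\delta\}^{|U|}$, which is immediate since each entry is a minimum with $\delta$. Second, $\vecn' \preccurlyeq \vecn$, which I would check directly against \Cref{def:configuration_derivation}: when $n_i' > 0$ we have $n_i > 0$ and $n_i \ge \min\{n_i,\delta\} = n_i'$, and when $n_i' = 0$ the fact that $\delta \ge 1$ (as $\delta = \max\{m(m+1),2m+1\}\ge 1$) forces $n_i = 0 = n_i'$, so the zero/nonzero support pattern demanded by the derivation relation is preserved. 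Third, and the only substantive point, $\vecn'$ must be shown satisfiable.

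To obtain satisfiability of $\vecn'$ I would reduce the oversized coordinates one at a time and invoke \Cref{le:upper_bound} repeatedly. Let $i_1,\dots,i_t$ enumerate the indices with $n_{i_j} > \delta$, set $\vecn^{(0)} = \vecn$, and define $\vecn^{(j)} = \vecn^{(j-1)}[i_j \mapsto \delta]$. At step $j$ the $i_j$-th coordinate of $\vecn^{(j-1)}$ still equals the original value $n_{i_j} > \delta$, since the previous steps modified only the distinct coordinates $i_1,\dots,i_{j-1}$; hence \Cref{le:upper_bound} applies and carries satisfiability from $\vecn^{(j-1)}$ to $\vecn^{(j)}$. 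After $t$ steps $\vecn^{(t)} = \vecn'$ is satisfiable, completing the argument.

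I do not expect any genuine obstacle: the theorem is essentially a bookkeeping consequence of the two propositions, as already indicated in the statement. The only points requiring care are the mild ones noted above, namely that the single-coordinate reductions of \Cref{le:upper_bound} do not interfere with one another (guaranteed because each oversized coordinate is clamped exactly once) and that clamping preserves the support pattern so that $\vecn' \preccurlyeq \vecn$ actually holds under \Cref{def:configuration_derivation}.
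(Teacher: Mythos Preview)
Your proposal is correct and follows exactly the approach the paper intends: the paper itself states only that the theorem ``follows directly from Lemma~\ref{le:configuration_monotonicity} and Lemma~\ref{le:upper_bound}'' without spelling out details, and your argument (monotonicity for $\Leftarrow$; coordinatewise clamping plus iterated application of \Cref{le:upper_bound} for $\Rightarrow$) is precisely the natural way to unpack that sentence. The checks you flag about support preservation and non-interference of the single-coordinate clampings are exactly the bookkeeping needed, and they go through as you describe.
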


Recall that the \fotwo{} sentence $\sentence$ is fixed.
We can iterate over configurations in $\{0, 1, \dots, \delta\}^{|U|}$ and find all satisfiable configurations.
Note that the number of configurations in $\{0, 1, \dots, \delta\}^{|U|}$ is independent of the domain size $n$, and so is the number of satisfiable configurations.
Then according to \cref{th:1-type_configuration_satisfiability}, we can determine the satisfiability of arbitrary configuration $\vecn$ by simply checking whether it can be derived from one of these satisfiable configurations.
Therefore, we have the following corollary:
\begin{corollary}
  \label{cor:configuration_decision}
  The configuration decision problem $\confsat(\sentence, \vecn)$ can be solved in $\logO(1)$.
\end{corollary}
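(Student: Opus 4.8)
The plan is to turn \Cref{th:1-type_configuration_satisfiability} into an algorithm whose per-query cost is independent of $n$ up to the arithmetic needed to read the input. Since $\sentence$ is fixed, both the bound $\delta$ from \Cref{le:upper_bound} and the number $|U|$ of $1$-types are constants, so the box $\{0,1,\dots,\delta\}^{|U|}$ contains only $(\delta+1)^{|U|} = O(1)$ configurations. First I would perform a one-time preprocessing step that computes the set $S$ of all \emph{satisfiable} configurations inside this box. Each configuration $\vecn'$ in the box has total size $|\vecn'| \le \delta\,|U|$, again a constant, so its satisfiability can be decided by a brute-force search over all structures on a domain of that fixed size (or by a single call to a satisfiability oracle on a constant-size instance). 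Because neither the number of candidates nor the cost of each check depends on $n$, the whole preprocessing runs in time $O(1)$ with respect to the domain size; it depends only on $\sentence$.

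Given the precomputed table $S$, answering a query $\confsat(\sentence, \vecn)$ reduces, by \Cref{th:1-type_configuration_satisfiability}, to testing whether some $\vecn' \in S$ satisfies $\vecn' \preccurlyeq \vecn$. I would realize this test directly through the derivation relation of \Cref{def:configuration_derivation}: for each of the $O(1)$ candidates $\vecn' \in S$, I check the two conditions coordinatewise, namely $n_i = 0$ whenever $n_i' = 0$ and $n_i \ge n_i'$ whenever $n_i' > 0$. Equivalently, and this is the cleanest implementation, I would form the truncated configuration $\vecn^{\delta}$ with $\min(n_i,\delta)$ in each coordinate and simply look it up in $S$: since $\delta \ge 1$ we have $\min(n_i,\delta)=0 \iff n_i=0$, so $\vecn^{\delta}$ and $\vecn$ share the same support and $\vecn^{\delta} \preccurlyeq \vecn$; combining \Cref{le:configuration_monotonicity} (to go up from a truncation) and \Cref{le:upper_bound} (to push each over-$\delta$ coordinate down to $\delta$) shows that $\vecn$ is satisfiable if and only if $\vecn^{\delta} \in S$.

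For the complexity accounting, the only quantity that grows with $n$ is the bit-length of the coordinates $n_i$, which is $O(\log n)$. Each coordinatewise comparison (or each truncation $\min(n_i,\delta)$) therefore costs $\logO(1)$, and there are only $|U| = O(1)$ coordinates and $|S| = O(1)$ candidates, so the total query time is $|S| \cdot |U| \cdot \logO(1) = \logO(1)$, as claimed in \Cref{cor:configuration_decision}.

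The step I expect to require the most care is the complexity bookkeeping rather than any deep mathematics. I must be explicit that the derivation relation is \emph{not} plain coordinatewise domination: the exact-match condition on zero coordinates forces $\vecn'$ and $\vecn$ to have identical supports, which is what lets the truncation $\vecn^{\delta}$ faithfully represent $\vecn$. I must also justify that the one-time construction of $S$, which may involve searching over a number of small structures that is exponential in the size of $\sentence$, is legitimately absorbed into the ``$\sentence$ fixed'' assumption and contributes nothing to the $n$-dependence. Once these two points are pinned down, the $\logO(1)$ bound follows immediately from \Cref{th:1-type_configuration_satisfiability}.
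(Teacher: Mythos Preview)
Your proposal is correct and mirrors the paper's own argument almost exactly: precompute the finitely many satisfiable configurations in $\{0,\dots,\delta\}^{|U|}$ (constant work since $\sentence$ is fixed), then answer each query by testing the derivation relation against this table, with only $O(\log n)$ arithmetic cost per coordinate. Your truncation variant $\vecn^{\delta}$ is a nice implementation detail the paper does not spell out, but it is equivalent to, and justified by, the same combination of \Cref{le:configuration_monotonicity} and repeated application of \Cref{le:upper_bound}.
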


\section{Unary Substructure Enumeration}
\label{sec:unary_substructure}

As we discussed when introducing the substructures, one can enumerate models in two phases:
\begin{itemize}
  \item Unary Substructure Enumeration: Enumerate unary substructures $\usub$ such that $\usub$ is $\domain$-consistent with $\sentence$;
  \item Binary Substructure Enumeration: Conditioned on the unary substructure $\usub$, enumerate binary substructures $\bsub$ such that $\usub \cup \bsub$ is a model of $\sentence$.
\end{itemize}
In this section, we focus on the first phase.

Enumerating unary substructures is equivalent to enumerating 1-type assignments for elements in the domain.
We achieve this by first enumerating satisfiable configurations, and then enumerating 1-type assignments according to the configurations.

\subsection{Satisfiable Configuration Enumeration}
\label{sub:satisfiable_configuration}

From~\cref{cor:configuration_decision}, we know that the configuration decision problem can be solved efficiently in $\logO(1)$.
Intuitively, given a domain size $n$, we can enumerate satisfiable configurations by iterating over all possible configurations (i.e., the partitions of $n$ to $|U|$ parts) and checking whether they are satisfiable.
However, this would eventually lead to a delay complexity exceeding $\logO(n^2)$.

Observe that by \cref{th:1-type_configuration_satisfiability}, each satisfiable configuration can be derived from a configuration whose cardinalities are bounded by $\delta$, which provides another clue.
We can first obtain these bounded satisfiable configurations and then derive further satisfiable configurations from them.
We refer to such bounded satisfiable configurations as \emph{template configurations}.
\begin{definition}[Template Configuration]
  \label{def:template_configuration}
  Given an \fotwo{} sentence $\sentence$, if a configuration  $\bar{\vecn} \in  \{0, 1, \dots, \delta\}^{|U|}$ is satisfiable, then $\bar{\vecn}$ is a \emph{template configuration} of $\sentence$.
\end{definition}
Note that the number of template configurations is also independent of the input domain size $n$.
Thus, we can obtain all template configurations beforehand with a constant complexity.
Now we consider how to enumerate satisfiable configurations $\vecn$ with $|\vecn| = n$ based on the template configurations.
For a template configuration $\bar{\vecn}$, we use $N_{\delta}$ to denote the number of 1-types whose cardinalities are equal to $\delta$ in the configuration.
The core idea is to iterate template configurations and for each template configuration $\bar{\vecn}$, partition the integer $n - |\bar{\vecn}|$ into $N_{\delta}$ parts, and add them to respective cardinalities whose value is equal to $\delta$.
According to Lemma \ref{le:configuration_monotonicity}, the new configurations are all satisfiable (they are derived from a satisfiable configuration).
The following example illustrates the process of enumerating satisfiable configurations from a template configuration.
\begin{example}
  Let $\delta = 6$ and $n = 21$.
  Consider a template configuration $(6,4,6,2)$.
  There are $18$ elements in total in the template configuration and $N_{\delta} = 2$.
  So we partition $3$ into $2$ parts, i.e., $(3, 0)$, $(2, 1)$, $(1, 2)$ and $(0, 3)$, and add them to the first and third cardinalities in the template configuration.
  Figure~\ref{fig:1-type_config} shows the illustration of this process.
\end{example}
\begin{figure}
  \centering
  \includegraphics[width=0.48\textwidth]{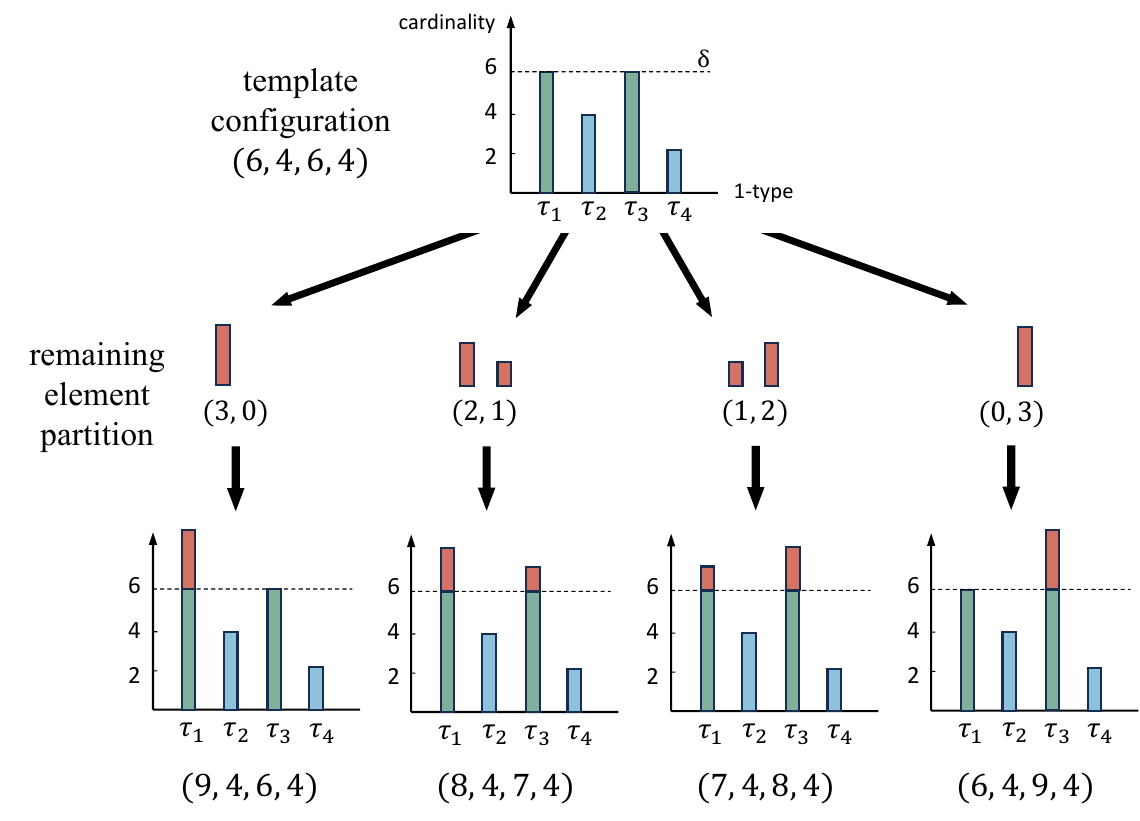}
  \caption{An example of enumerating satisfiable configurations from a template configuration $(6,4,6,4)$, where $\delta = 6$ and $n = 21$.}
  \label{fig:1-type_config}
\end{figure}
In addition to the base case shown in the previous example, there are some special cases that we need to further consider.
When the given domain size $n$ is small enough, the number of elements in some template configurations may exceed $n$ (e.g., $n < 18$ in the previous example).
In this case, we skip them since we cannot derive a proper configuration.
On the other hand, there may be also no 1-type whose cardinality is equal to $\delta$ in some template configurations (i.e., $N_{\delta} = 0$), and thus the remaining number of elements cannot be partitioned.
We also directly skip in this case.

In summary, given a domain size $n$, we enumerate satisfiable configurations by iterating over all template configurations with the following steps: For each template configuration $\bar{\vecn}$, 
\begin{enumerate}
  \item if $|\bar{\vecn}| > n$, skip it,
  \item if $|\bar{\vecn}| = n$, it is a satisfiable configuration in itself, and we directly yield it,
  \item if $|\bar{\vecn}| < n$ and $N_{\delta} = 0$, skip it, and
  \item if $|\bar{\vecn}| < n$ and $N_{\delta} > 0$, partition the number of remaining elements into $N_\delta$, add them to $\bar{\vecn}$, as described previously, and yield the resulting configuration.
\end{enumerate}

\begin{lemma}
  Fix an \fotwo{} sentence $\sentence$, for a positive integer $n$, the above method can enumerate all satisfiable configurations $\vecn$ of $\sentence$ such that $|\vecn| = n$ with a delay complexity $\logO(1)$.
  \label{le: enumerate configs}
\end{lemma}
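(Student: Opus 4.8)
The plan is to establish two things for the enumeration procedure described: correctness (every satisfiable configuration of size $n$ is produced exactly once, and nothing unsatisfiable is produced) and the $\logO(1)$ delay bound. I would begin by fixing the sentence $\sentence$, so that the set of template configurations and the constant $\delta$ are all precomputed quantities whose number is independent of $n$ (this is guaranteed by \Cref{def:template_configuration} and the remark following \Cref{le:upper_bound}, since template configurations live in $\{0,\dots,\delta\}^{|U|}$). The main engine of the argument is \Cref{th:1-type_configuration_satisfiability}: a configuration $\vecn$ with $|\vecn|=n$ is satisfiable if and only if it can be derived (via $\preccurlyeq$) from some template configuration.

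For correctness I would argue both inclusions. For soundness, every configuration the method yields in steps (2) and (4) is obtained from a template configuration $\bar\vecn$ either by equality or by adding a nonnegative integer partition of $n-|\bar\vecn|$ into the $N_\delta$ coordinates equal to $\delta$; in either case $\bar\vecn \preccurlyeq \vecn$ holds and $\bar\vecn$ is satisfiable by definition, so \Cref{le:configuration_monotonicity} (Monotonicity of Satisfiability) guarantees $\vecn$ is satisfiable. For completeness, I would take an arbitrary satisfiable $\vecn$ with $|\vecn|=n$ and invoke \Cref{th:1-type_configuration_satisfiability} to obtain a template configuration $\bar\vecn \preccurlyeq \vecn$. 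The derivation relation forces $\bar n_i = n_i$ whenever $\bar n_i = 0$ and allows $n_i \ge \bar n_i$ only when $\bar n_i > 0$; the excess $n_i - \bar n_i$ can be positive only on coordinates where $\bar n_i = \delta$ — this is the crucial point requiring care, since one must check that a \emph{minimal} template configuration places the growth exactly on the $\delta$-saturated coordinates, matching the $N_\delta$-part partition in step (4). I would therefore argue that $\vecn$ is reached from this $\bar\vecn$ by precisely one of the partitions enumerated in step (4), establishing completeness.

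The delicate part of correctness is ensuring \emph{each configuration is produced exactly once} — distinct $(\bar\vecn, \text{partition})$ pairs could in principle collide on the same $\vecn$. The obstacle here, which I expect to be the main one, is that a single satisfiable $\vecn$ might be derivable from several different template configurations, causing duplicate output. I would address this either by showing that the minimal template configuration below $\vecn$ in the $\preccurlyeq$ order is unique, or by describing a canonicalization that assigns each output to a single template, so that the union over templates is a disjoint partition of the satisfiable configurations of size $n$. This uniqueness/disjointness argument is where the bulk of the rigor must go.

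For the delay bound, I would observe that between two consecutive outputs the algorithm performs only a constant number of arithmetic operations on numbers of magnitude at most $n$: computing $|\bar\vecn|$ and $N_\delta$ (both over a fixed-length vector), deciding which of the four cases applies, and advancing the integer-partition generator of $n-|\bar\vecn|$ into $N_\delta$ parts. Using a standard constant-amortized-delay partition generator (for a fixed number $N_\delta \le |U|$ of parts, independent of $n$), each successive partition is produced with $O(1)$ updates, each an arithmetic operation on $O(\log n)$-bit integers, giving $\logO(1)$ delay per configuration; the finitely many skipped templates in cases (1) and (3) contribute only an additive constant between yields. Combining the correctness and timing claims yields the lemma.
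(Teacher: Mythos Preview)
Your plan follows the same architecture as the paper's proof --- soundness via \Cref{le:configuration_monotonicity}, completeness by matching each satisfiable $\vecn$ to a template, distinctness, and the delay bound via the fixed number of templates --- but there is a concrete gap in the completeness and distinctness steps.

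For completeness you invoke \Cref{th:1-type_configuration_satisfiability} to obtain \emph{some} template $\bar\vecn \preccurlyeq \vecn$ and then want the excess $n_i - \bar n_i$ to be positive only where $\bar n_i = \delta$. This fails for an arbitrary such $\bar\vecn$: with $\delta = 5$, the configuration $\vecn = (3,7)$ may well dominate a template $(2,5)$, yet the excess $(1,2)$ is partly on a non-saturated coordinate, so step (4) from $(2,5)$ never produces $(3,7)$. Your proposed remedy of passing to a \emph{minimal} template goes the wrong way: minimal templates need not be unique, and shrinking coordinates only makes the problem worse. The template you actually need is the \emph{maximal} one below $\vecn$, namely the coordinatewise truncation $\bar n_i := \min(n_i,\delta)$; this is satisfiable by iterating \Cref{le:upper_bound}, and by construction its excess is confined to the $\delta$-coordinates. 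The paper does precisely this: it builds the truncated template explicitly rather than going through \Cref{th:1-type_configuration_satisfiability}.

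The same truncation dissolves your distinctness worry without any uniqueness-of-minimal argument or ad hoc canonicalization. Every output of case (4) has at least one coordinate strictly above $\delta$ (since $n-|\bar\vecn|>0$ is distributed among $\delta$-coordinates), so cases (2) and (4) have disjoint ranges; and within case (4) the producing pair $(\bar\vecn,\text{partition})$ is recovered from the output by truncating at $\delta$, so the map is injective. Your delay argument is correct and matches the paper's.
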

\begin{proof}
  First, we show that the method can enumerate all satisfiable configurations.
  It is obvious that the configurations obtained by the method are all satisfiable and the sum of their cardinalities is equal to $n$.
  Furthermore, the method can obtain all satisfiable configurations: For any satisfiable configuration $\vecn$ such that $|\vecn| = n$, 
  \begin{itemize}
    \item if $\vecn$ does not contain any cardinality larger than $\delta$, then $\vecn$ is a template configuration and can be obtained by the case $2$, and
    \item otherwise, we can always find a template configuration $\bar{\vecn}$ with the same cardinalities as $\vecn$ except for those larger than $\delta$, and obtain $\vecn$ by partitioning the number of remaining elements as described in the case $4$.
  \end{itemize}
  Next, we show that the satisfiable configurations yielded by the method are all distinct.
  \begin{itemize}
    \item If the configuration does not contain any cardinality larger than $\delta$, then it is a template configuration, and thus must be distinct (it can only be obtained by the case $2$);
    \item If there is a cardinality larger than $\delta$ in the configuration, then it must be obtained by the case $4$.
    Since the template configuration iterated in the case $4$ is distinct, combined with the partition process, we know that the configurations obtained by the case $4$ are also distinct.
  \end{itemize}

  Finally, we show that the delay complexity is $\logO(1)$.
  Since the number of template configurations is independent of domain size $n$, even in the worst case, the number of configurations we skip (due to the case $1$ and case $3$) is independent of $n$.
  For the case $4$, as $N_\delta$ is also independent of $n$, enumerating partitions of the integer $n-|\vecn|$ can be done with a constant time delay complexity.
\end{proof}

\subsection{Unary Substructure Enumeration Algorithm}

We consider the domain $\domain = \{e_1, \dots, e_n\}$ of size $n$.
For a satisfiable configuration $\vecn=(n_1, \dots, n_{|U|})$ such that $|\vecn| = n$, we can enumerate all its corresponding unary substructures over $\domain$ by partitioning the domain according to $\vecn$.
For example, given a configuration $\vecn = (3,2,1)$, we can partition the domain into three disjoint subsets $\{e_1, e_2, e_3\}$, $\{e_4, e_5\}$, and $\{e_6\}$, and assign the first, second, and third 1-types in $U$ to the elements in these three subsets, respectively.
By the definition of configurations satisfiability and the consistency of unary substructures, we know that the enumerated unary substructures are all $\domain$-consistent with the given sentence $\sentence$.

Enumerating all possible partitions of the domain according to $\vecn$ is a classic combinatorial problem, and there are plenty of efficient algorithms~\cite{goulden2004combinatorial}.
When $|U|$ is fixed (due to the fixed sentence $\sentence$), the delay complexity of enumerating partitions of a configuration is $\logO(n)$.

The full algorithm for enumerating unary substructures is shown in Algorithm~\ref{alg:unary_substructure}.
The \emph{``yield''} operator is used to output values to its calling context.
Intuitively, when we call a function that contains a ``yield'' statement, the state of the function is preserved while yielding and resuming a suspended iterator causes it to continue from the point of yielding.
It is very useful for enumeration algorithms, where we do not need to handle the current state of the enumeration process explicitly.
Due to the constant complexity assumption of outputting a value, the yielding process can be done in linear time. 
The function $\mathsf{EnumSatConfigs}(\sentence, n$) yields satisfiable configurations $\vecn$ of $\sentence$ such that $|\vecn| = n$ by the method in the previous subsection, and the function $\mathsf{EnumPartitions}(\domain,\vecn)$ produces unary substructures over $\domain$ by partitioning $\domain$ according to the configuration $\vecn$.
\begin{algorithm}[tbp]
  \caption{$\mathsf{EnumUnarySubstructures}(\sentence, n)$}
  \label{alg:unary_substructure}
  \textbf{Input:} An \fotwo{} sentence $\sentence$ and a positive integer $n$ \\
  \textbf{Output:}  Unary substructures of $\sentence$ over $\domain = \{e_1, \dots, e_n\}$
  \begin{algorithmic}[1]
    \State $\domain \gets \{e_1, \dots, e_n\}$
    \For{$\vecn$ $\gets$ $\mathsf{EnumSatConfigs} (\sentence, n$)}
      \For{$\usub \gets$ $\mathsf{EnumPartitions} (\domain, \vecn)$}
        \State \textbf{yield} $\usub$ 
      \EndFor
    \EndFor
  \end{algorithmic}
\end{algorithm}

\begin{lemma}
  Fix an \fotwo{} sentence $\sentence$, for a positive integer $n$, the delay complexity of enumerating unary substructures of $\sentence$ over a domain of size $n$ is $\logO(n)$.
\end{lemma}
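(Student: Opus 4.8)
The plan is to combine the two enumeration subroutines analyzed in this section and argue that their delays compose additively up to constant factors. The unary substructure enumeration proceeds in two nested loops, as shown in Algorithm~\ref{alg:unary_substructure}: the outer loop iterates over satisfiable configurations via $\mathsf{EnumSatConfigs}(\sentence, n)$, and for each configuration the inner loop iterates over the corresponding domain partitions via $\mathsf{EnumPartitions}(\domain, \vecn)$. The delay of the overall procedure is the time elapsed between two consecutive \textbf{yield} statements, so I would bound this by accounting for the cost incurred at each point where control passes through the loops.

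First I would handle the typical case, where control stays inside a single inner loop: two consecutive unary substructures are produced by $\mathsf{EnumPartitions}(\domain, \vecn)$ for the same configuration $\vecn$. Since the sentence is fixed, $|U|$ is constant, and as noted just above, enumerating the next partition of a domain of size $n$ into $|U|$ blocks of prescribed sizes can be done with delay $\logO(n)$ using a standard combinatorial generation algorithm~\cite{goulden2004combinatorial}; outputting the substructure itself is linear in its size, hence also $\logO(n)$. Next I would treat the transition case, where the inner loop for one configuration has just been exhausted and the algorithm must advance the outer loop to obtain the next satisfiable configuration before resuming partition enumeration. By \Cref{le: enumerate configs}, $\mathsf{EnumSatConfigs}$ has delay $\logO(1)$, so producing the next configuration $\vecn'$ costs only $\logO(1)$, after which the first partition for $\vecn'$ is generated in $\logO(n)$. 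Thus even across a configuration boundary the delay remains $\logO(n) + \logO(1) = \logO(n)$.

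The one subtlety I would be careful about is that the outer loop could in principle produce configurations for which the inner loop yields nothing, forcing several outer iterations between consecutive outputs; I would note, however, that every satisfiable configuration $\vecn$ with $|\vecn| = n$ admits at least one domain partition, so $\mathsf{EnumPartitions}(\domain, \vecn)$ always yields at least one substructure, and no ``empty'' inner loops occur. Consequently each advance of the outer loop is immediately followed by at least one output, and the transition cost $\logO(1)$ is never amortized over more than a constant number of outer steps. Combining the two cases, the delay between any two consecutive unary substructures is $\logO(n)$, which establishes the claim. The main point to get right is this composition argument — ensuring that the constant-delay configuration enumeration does not interfere with the $\logO(n)$ partition enumeration and that no degenerate empty iterations inflate the delay — rather than any single numerical estimate.
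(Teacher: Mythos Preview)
Your proposal is correct and takes essentially the same approach as the paper: the paper's own proof is a one-liner stating that the result follows directly from \cref{alg:unary_substructure} and \cref{le: enumerate configs}, and your argument simply spells out the composition of the $\logO(1)$ outer-loop delay with the $\logO(n)$ inner-loop delay that the paper leaves implicit. Your attention to the non-emptiness of the inner loop is a nice touch that the paper does not mention but which is indeed needed for the argument to be airtight.
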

\begin{proof}
  The proof follows directly from \cref{alg:unary_substructure} and \cref{le: enumerate configs}.
\end{proof}

\section{Binary Substructure Enumeration}
\label{sec:binary_substructure}

In this section, we focus on the second phase of the enumeration algorithm, enumerating binary substructures conditioned on a unary substructure.
The core idea is \emph{recursion and backtracking}.
Specifically, we construct a binary substructure by recursively determining the 2-types between each pair of elements, and exploring all possible binary substructures via backtracking.

\subsection{A Working Example}

Let us first introduce a working example that will be used throughout this subsection.
Consider an \fotwo{} sentence $\sentence_G$ that describes undirected graphs without any isolated vertex:
\begin{equation}
  \label{eq:example_sentence}
  \begin{aligned}
      \sentence_G = \ & \forall x \forall y: \neg E(x,x) \land (E(x,y) \rightarrow E(y,x)) \ \land \\
      &\forall x \exists y: E(x,y)
  \end{aligned}
\end{equation}
and a domain size $n = 3$. Let the domain $\domain = \{v_1,v_2,v_3\}$.

There is only one compatible 1-type $\tau(x) = \neg E(x,x)$ of $\sentence$, and two 2-types that are compatible with $\tau(x)$:
\begin{equation}
  \begin{gathered}
    \pi_E(x,y) = E(x,y) \land E(y,x), \\
    \pi_{\neg E}(x,y) = \neg E(x,y) \land \neg E(y,x).
  \end{gathered}
  \label{eq:example_types}
\end{equation}
These 2-types represent whether two elements $v_i$ and $v_j$ are connected ($\pi_E(v_i,v_j)$ holds) or not connected ($\pi_{\neg E}(v_i,v_j)$ holds) in the graph.
There is only one $\domain$-consistent unary substructure $\usub = \{\neg E(v_1, v_1), \neg E(v_2, v_2), \neg E(v_3, v_3)\}$.

\begin{figure*}
  \centering
  \includegraphics[width=0.9\textwidth]{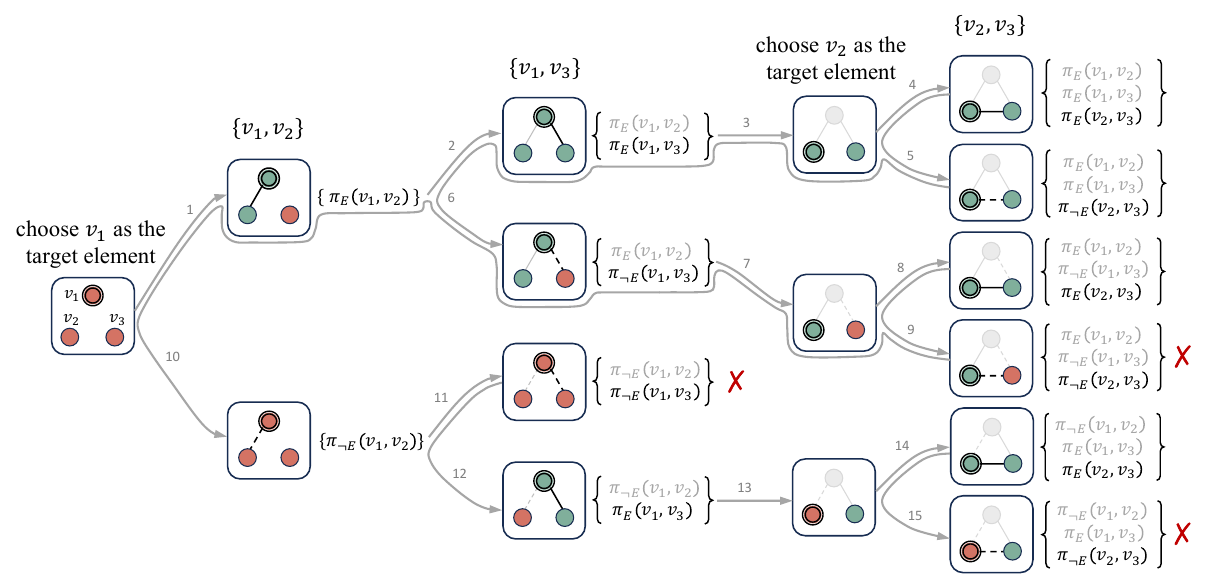}
  \caption{The process of model enumeration of the sentence $\sentence_G$ over the domain $\domain = \{v_1,v_2,v_3\}$. 
  Arrows between graphs and their numbers show the recursion and backtracking path. 
  The solid and dashed edges in graphs represent the 2-types $\pi_E$ and $\pi_{\neg E}$ respectively, and the green and red represent non-isolated and isolated vertices respectively. 
  The double circle represents the chosen target vertex. The sets next to the graphs indicate the partial binary substructure determined so far.
  The cross indicates that the substructure of this step is not $\domain$-consistent. 
  We mark the determined 2-types in gray, and also mark the vertex whose 2-types have been determined in gray, indicating that they will not change in subsequent processing. 
  When all 2-types of vertex pairs are processed, we yield a binary substructure, as shown in the last column.}
  \label{fig:example_tree}
\end{figure*}

We briefly describe the binary substructure enumeration process for this example as illustrated in Figure~\ref{fig:example_tree}.
We begin by initializing a binary substructure $\bsub$ as an empty set and choosing an arbitrary element as the \emph{target element}. 
Without loss of generality, let this element be $v_1$.
We determine the 2-type of the element pairs involving $v_1$, i.e., $\{v_1, v_2\}$ and $\{v_1, v_3\}$, one by one.
For each pair, we iterate over all possible 2-types, i.e., $\pi_E$ and $\pi_{\neg E}$.
Consider the pair $\{v_1, v_2\}$ and its 2-type $\pi_E$, if the substructure $\usub \cup \bsub \cup \pi_E(v_1,v_2)$ is $\domain$-consistent with $\sentence_G$, we add $\pi_E(v_1,v_2)$ to the binary substructure $\bsub$ we are constructing and continue to process the next pair $\{v_1, v_3\}$.
Once all 2-types involving $v_1$ have been determined, the $v_1$-related literals (including its 1-type and all 2-types with other elements) are fully determined. 
We then choose another element $v_2$ as the target element and repeat the same process.
After 2-types of all pairs are determined, we yield a complete binary substructure $\bsub$.
At this point, we backtrack to the previous step and explore different 2-types to enumerate other binary substructures.
If a substructure is not $\domain$-consistent with $\sentence_G$ after adding a ground 2-type, we directly backtrack to the previous step and try a different 2-type.
For example, in Step $11$ of Figure~\ref{fig:example_tree}, specifying $\pi_{\neg E}$ as the 2-type of $\{v_1, v_3\}$ leads to an inconsistent substructure ($v_1$ is an isolated vertex).
Consequently, we backtrack to the previous step and try another 2-type $\pi_E$ in step $12$.


\subsection{Recursion and Backtracking Algorithm}

Now we present the method of enumerating binary substructures conditioned on a given unary substructure $\usub$.
The domain considered is still $\domain = \{e_1, \dots, e_n\}$, and we leave out $\domain$ in ``$\domain$-consistent'' when the context is clear.

As we discussed, the process of constructing a binary substructure involves assigning 2-types to all pairs of elements.
We process these pairs by the recursion on elements: select a target element, process all pairs related to it, and then move on to the next element, repeating until all pairs are processed.
Note that the order of our recursion is not a traditional depth-first search that traverses all pairs along a single path.
It is instead a \emph{domain-based recursion}.\footnote{This resembles the domain recursion used in first-order model sampling~\cite{wang_exact_2023-1,wang2024lifted}.
The key difference is that in enumeration, only one 2-type is determined at each step, while in sampling, all 2-types related to an element are sampled at once.}
Then, all binary substructures are enumerated by backtracking.
During each 2-type assignment, we check whether the current substructure is consistent with the sentence $\sentence$ to avoid unnecessary recursion.
Formally, it is to solve a \emph{substructure decision problem}.
\begin{problem}[Substructure Decision Problem]
  Fix an \fotwo{} sentence $\sentence$, 
  the \emph{substructure decision problem} $\subsat(\sentence, \domain, \substr)$ asks whether the substructure $\substr$ is $\domain$-consistent with the sentence $\sentence$.
\end{problem}

\begin{algorithm}[tbp]
  \caption{Model Enumeration for \fotwo{}}
  \label{alg:recursive}
  \textbf{Input:} An \fotwo{} sentence $\sentence$, a positive integer $n$ \\
  \textbf{Output:} Models of the sentence $\sentence$ over $\domain = \{e_1, \dots, e_n\}$
  \begin{algorithmic}[1]
    \State Set global variables $\sentence$ and $\domain \gets \{e_1, \dots, e_n\}$
    \For {$\usub$ $\gets$ $\mathsf{EnumUnarySubstructures}(\sentence, n$)}
      \For {$\fomodel \gets \mathsf{DomainRecursion}(\domain, \usub)$}
        \State \textbf{yield} $\fomodel$
      \EndFor
    \EndFor

    \State

    \Function{}{}$\mathsf{DomainRecursion}(\domain^{\bot}, \substr)$
      \If {$|\domain^{\bot}| = 1$} \label{line:base_case}
        \State \textbf{yield} $\substr$ \Comment{$\substr$ is actually a model}
        \State \Return
      \EndIf
      \State $\targete \gets$ Choose a target element from $\domain^{\bot}$ \label{line:choose_element}
      \For {$\substr \gets \mathsf{PairRecursion}(\targete, \ \domain^{\bot} \! \setminus \! \{\targete\}, \ \substr)$} \label{line:local_struct_sample}
        \State $\mathsf{DomainRecursion}(\domain^{\bot} \! \setminus \! \{\targete\}, \ \substr)$ \label{line:rec_call_domain_recursion}
      \EndFor
    \EndFunction

    \State
    
    \Function{}{}$\mathsf{PairRecursion}(\targete, \domain^{\!-}, \substr$)
      \If {$\domain^-$ is $\varnothing$}
        \State \textbf{yield} $\substr$ \Comment{$\substr$ contains a local substructure of $\targete$} \label{line:yield_local_struct}
        \State \Return
      \EndIf
      \State $e \gets$ Choose an element from $\domain^{\!-}$ \label{line:choose_element_2}
      \For {each compatible 2-type $\pi$ of pair $\{e^*, e\}$} \label{line:iterate_2types}
        \If {$\subsat(\sentence, \domain, \substr\cup \pi(\targete,e))$} \label{line:sat_check}
          \State $\mathsf{PairRecursion}(\targete, \domain^{\!-} \! \setminus \! \{e\}, \substr\cup\pi(\targete,e))$ \label{line:recursive_inner}
        \EndIf
      \EndFor
    \EndFunction
  \end{algorithmic}
\end{algorithm}

Our algorithm is shown in Algorithm~\ref{alg:recursive}.
For ease of description, we use the term \emph{local substructure} to denote the substructure of an element $e$ that consists of the ground 2-types between $e$ and all other elements, i.e., $\bigcup_{e' \in \domain \setminus \{e\}} \pi(e, e')$.
The algorithm starts with the unary substructure $\usub$ enumerated by \cref{alg:unary_substructure}, and employs a recursion mechanism with two levels for enumerating the local substructure of each element.
\begin{itemize}
  \item The outer recursion via the function $\mathsf{DomainRecursion}$: 
  The parameter $\domain^{\bot}$ consists of elements, for which the algorithm has not started to process their local substructures yet.
  Choose an element $\targete$ from $\domain^{\bot}$, termed the \emph{target element}, and invoke the inner recursion $\mathsf{PairRecursion}$ to obtain its local substructure (\cref{line:local_struct_sample}).
  Each time yielding a local substructure, we recursively call $\mathsf{DomainRecursion}$ to process the remaining elements in $\domain^{\bot} \setminus \{\targete\}$ (\cref{line:rec_call_domain_recursion}).
  When the outer recursion reaches its base case (\cref{line:base_case}), where there is only one element left to process, the structure $\substr$, which is actually a model, is yielded, and then the algorithm backtracks into the previous step of calling $\mathsf{PairRecursion}$ in the for-loop (\cref{line:local_struct_sample}) (note that the keyword ``return'' interrupts the resuming by the yield statement).
  \item The inner recursion via the function $\mathsf{PairRecursion}$: 
  The parameter $\domain^-$ refers to the set of elements whose 2-types with $\targete$ have not been determined (i.e., their 2-types are not included in the local substructure of $\targete$).
  $\substr$ contains the unary substructure as well as the partial binary substructure constructed so far.
  In this function, we choose an element $e$ from $\domain^-$, and then iterate every compatible 2-types $\pi$ with the realized 1-types of $\targete$ and $e$.
  If $\substr \cup \pi(\targete,e)$ is consistent with $\sentence$ (by solving a substructure decision problem), add $\pi(\targete,e)$ to $\substr$ and call the inner recursion to process the next pair related to the target element (\cref{line:recursive_inner}).
  Since $S$ is consistent with $\sentence$, we can always find a 2-type $\pi$ passing the consistency check.
  Once all 2-types related to $\targete$ are determined, we obtain a local substructure of $\targete$, and the algorithm continues to process the outer recursion for the next element (\cref{line:yield_local_struct}).
\end{itemize}


Since throughout the algorithm, we always ensure that the substructure is consistent with the sentence $\sentence$, the final enumerated structure is a model of $\sentence$ over $\domain$.
It is easy to check that the algorithm enumerates all models without duplicates.

\begin{proposition}\label{lem:enumeration}
  \cref{alg:recursive} correctly enumerates all models of an \fotwo{} sentence $\sentence$ over a domain $\domain$ of size $n$.
\end{proposition}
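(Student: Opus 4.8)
The plan is to establish three properties of \cref{alg:recursive}: \emph{soundness} (everything yielded is a model), \emph{completeness} (every model is yielded), and \emph{non-redundancy} (no model is yielded twice). For soundness, I would argue by a loop invariant that the substructure $\substr$ passed around is always $\domain$-consistent with $\sentence$. The invariant holds initially because $\usub$ is $\domain$-consistent by the guarantees of \cref{alg:unary_substructure}, and it is preserved at every recursive call: in $\mathsf{PairRecursion}$ a ground 2-type $\pi(\targete,e)$ is added to $\substr$ only after the check $\subsat(\sentence,\domain,\substr\cup\pi(\targete,e))$ succeeds on \cref{line:sat_check}, and $\mathsf{DomainRecursion}$ merely forwards a substructure produced by $\mathsf{PairRecursion}$. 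When the base case $|\domain^\bot|=1$ is reached, every pair of distinct domain elements has had its 2-type fixed and every element has its 1-type from $\usub$, so $\substr$ is a \emph{complete} structure over $\domain$ that is $\domain$-consistent; since a complete $\domain$-consistent structure is a substructure of a model of $\sentence$ over $\domain$ and already assigns every literal, it must itself be that model. Hence every yielded $\substr$ is a model of $\sentence$.

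For completeness, I would fix an arbitrary model $\fomodel$ of $\sentence$ over $\domain$ and exhibit the branch of the recursion tree that yields it. The unary substructure $\usub_{\fomodel}$ of $\fomodel$ is one of the $\domain$-consistent unary substructures enumerated by \cref{alg:unary_substructure}, so the outer for-loop of \cref{alg:recursive} eventually selects it. Then, following the fixed element orderings used by the two ``choose'' steps (\cref{line:choose_element,line:choose_element_2}), at each pair $\{\targete,e\}$ the 2-type $\pi$ that $\fomodel$ actually realizes on $\{\targete,e\}$ is compatible with the 1-types of $\targete$ and $e$, and the partial substructure extended by $\pi(\targete,e)$ is $\domain$-consistent (witnessed by $\fomodel$ itself), so the $\subsat$ check on \cref{line:sat_check} passes and that branch is explored. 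Inducting on the number of undetermined pairs shows this branch reaches the base case and yields exactly $\fomodel$.

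For non-redundancy, the key observation is that the element-selection steps on \cref{line:choose_element,line:choose_element_2} are \emph{deterministic} given the current recursion state, so distinct leaves of the recursion tree differ in the 2-type assigned to at least one pair. Two distinct leaves therefore correspond to structures that disagree on some ground 2-type, hence to distinct models; conversely the completeness argument already pins each model to a single branch. I would make this precise by noting that the sequence of pairs $\{\targete,e\}$ processed is the same for every leaf (it depends only on the fixed orderings, not on the chosen 2-types), so a model is determined by, and determines, the tuple of 2-types read off along its branch.

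I expect the main obstacle to be the completeness step, specifically verifying that the intermediate $\subsat$ checks never prune a branch that leads to a genuine model. The subtlety is that a $\domain$-consistency check on a partial substructure must \emph{not} reject $\pi(\targete,e)$ when $\fomodel$ realizes $\pi$ on that pair; this requires that $\domain$-consistency is monotone in the sense that any sub-substructure of a model is $\domain$-consistent, which follows directly from the definition of $\domain$-consistency (a substructure of a model of $\sentence$ over $\domain$), but must be invoked carefully so that the partial substructure built along $\fomodel$'s branch is always recognized as extendable. The other steps are essentially bookkeeping about the recursion tree and the determinism of the selection order.
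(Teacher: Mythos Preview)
Your proposal is correct and follows essentially the same approach as the paper: fix a model, observe that its unary substructure is enumerated by \cref{alg:unary_substructure}, then argue by induction along the fixed processing order that each partial substructure on the model's branch passes the $\subsat$ check (witnessed by the model itself), with non-duplication following from the deterministic pair order and the for-loop over distinct 2-types at \cref{line:iterate_2types}. Your write-up is more thorough than the paper's terse proof---in particular you make soundness and the monotonicity of $\domain$-consistency explicit, both of which the paper leaves implicit---but the decomposition and key ideas are the same.
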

\begin{proof}
  For any model \fomodel of $\sentence$ over $\domain$, its unary substructure $\usub$ must be enumerated by $\mathsf{EnumUnarySubstructures}$ without duplicates according to \cref{sec:unary_substructure}.
  Without loss of generality, we assume that the algorithm processes the elements in the order $e_1, e_2, \dots, e_n$ in Lines~\ref{line:choose_element} and~\ref{line:choose_element_2}.
  Let $\pi_{ij}$ be the 2-type realized by the pair $\{e_i, e_j\}$ in \fomodel, and $\bsub^{i} = \bigcup_{j=i+1}^n \pi_{ij}(e_i,e_j)$.
  Then, it can be proved by induction that, for any $i\in[n]$ and $j=i+1,\dots,n$, the substructures 
  $$
  \bsub^{1}\cup \bsub^{2} \cup \dots \cup \bsub^{i-1}\cup \pi_{i,i+1}(e_i,e_{i+1})\cup \dots \cup \pi_{i,j}(e_i,e_j)$$ 
  can be enumerated by the algorithm.
  Specifically, these substructures are processed as $\substr$ in $\mathsf{PairRecursion}$.
  The non-duplication of models is guaranteed by the iteration over 2-types at \cref{line:iterate_2types}.
\end{proof}

\cref{alg:recursive} implicitly utilizes an early stopping mechanism to avoid exploring inconsistent substructures.
When adding a ground 2-type to the substructure $\substr$ will lead to its inconsistency (\cref{line:sat_check}), the algorithm will not proceed further and try another 2-type.
This ensures that every recursive call of $\mathsf{PairRecursion}$ produces a 2-type that must be a part of a model.
With this mechanism, one can easily check that the enumeration of any model only involves $\logO(n^2)$ times of $\mathsf{PairRecursion}$.

Now, the remaining problem is how to efficiently check the consistency of a substructure $\substr$, i.e., how to solve the substructure decision problem $\subsat(\sentence, \domain, \substr)$.
We show that this problem can be reduced to a configuration decision problem in the next section.

\section{From Substructure Decision Problem to Configuration Decision Problem}
\label{sec:from_substructure_to_configuration}

In this section, we illustrate how to reduce a substructure decision problem to a configuration decision problem. 

Recall that every substructure $\substr$ in \cref{alg:recursive} that needs to be checked for consistency with the sentence $\sentence$ (\cref{line:sat_check})
 has the form as stated in \cref{lem:enumeration}:
\begin{equation}
  \label{eq:substructure_format}
  \bsub^{1} \cup \dots \cup \bsub^{i-1}\cup \pi_{i,i+1}(e_i,e_{i+1})\cup \dots \cup \pi_{i,j}(e_i,e_j).
\end{equation}
Based on this, we first classify the elements in the domain $\domain$ for further processing.
The elements $\{e_1, \dots, e_{i-1}\}$, whose local substructures have been determined, denoted as $\domain^{\! \top}$, and the remaining elements, denoted as $\domain^\bot$.
The set $\domain^\bot$ is further partitioned into three disjoint parts: the target element $\targete$ (i.e., $e_i$ in \cref{eq:substructure_format}), the elements whose 2-types with $\targete$ are determined, denoted as $\domain^+$ (i.e., $\{e_{i+1}, ..., e_{j}\}$ in \cref{eq:substructure_format}), and the remaining elements, denoted as $\domain^-$.
These categories and symbols are also consistent with those in \cref{alg:recursive}.
The hierarchical partition can be shown as follows:
\begin{displaymath}
  \domain\ \left\{ \begin{array}{ll}
  \domain^{\! \top} \\
  \domain^\bot \
    \left\{ \begin{array}{ll}
      \{\targete\} \\
      \domain^+ \\
      \domain^- 
    \end{array} \right.
  \end{array} \right.
  \label{eq:domain_partition}
\end{displaymath}

\begin{example}
  Consider the example of enumerating non-isolated graphs, 
  after the third step in~\cref{fig:example_tree}, the target element $\targete$ is $v_2$, $\domain^+$ is empty, $\domain^-$ is $\{v_3\}$, and $\domain^{\! \top}$ is $\{v_1\}$.
\end{example}


\subsection{Domain Reduction}
According to \cref{alg:recursive}, the elements in $\domain^{\! \top}$ have their local substructures fully determined and satisfy the sentence $\sentence$.
Thus, they might be ignored when checking the consistency of the substructure $\substr$.
To safely remove them from $\domain$, we must consider their effect on the remaining elements in $\domain^\bot$.

We need the concept of \emph{$\beta$-satisfication}.
Recall that the sentence $\sentence$ is in SNF:
$$
  \sentence = \forall x \forall y: \phi(x,y) \land \bigwedge_{k\in[m]} \forall x\exists y: \beta_k(x,y).
$$
Given a substructure $\substr$, we say that an element $e$ is \emph{$\beta_k$-satisfied} in $\substr$ if there exists a positive literal $\beta_k(e,e')$ in $\substr$ for some element $e'$.
A model $\fomodel$ of $\sentence$ must satisfy that all elements in $\domain$ are $\beta_k$-satisfied in $\fomodel$ for all $k\in[m]$.

Let $\substr_{\domain^{\! \top}}$ be the union of all local substructures and ground 1-types of $\domain^{\! \top}$ in $\substr$.
By the consistency checks throughout the algorithm, we know that all elements $e\in\domain^{\! \top}$ are $\beta_k$-satisfied in $\substr_{\domain^{\! \top}}$ for all $k\in[m]$.
To encode the $\beta$-satisfactions of the elements in $\domain^{\! \bot}$, we introduce a set of Tseitin predicates $\{Z_1/1,Z_2/1,\dots,Z_m/1\}$ and transform $\sentence$ into an auxiliary sentence $\sentence'$:
\begin{align}
  \sentence' = & \ \forall x \forall y: \phi(x,y) \ \land \label{eq:enum_aux_formula_phi} \\ 
  &\bigwedge_{k\in[m]} \forall x: Z_k(x) \rightarrow (\exists y: \beta_k(x,y)). \label{eq:enum_aux_formula_Z}
\end{align}

Using the Tseitin predicates, we can capture the effect of the local substructures of $\domain^{\! \top}$ on the remaining elements in $\domain^\bot$ by manipulating $\substr$.
For any $e\in \domain^{\! \bot}$, denote by $\mathcal{Z}_e$ the interpretation of the Tseitin predicates over $e$ such that $Z_k(e)$ is \emph{negative} if and only if $e$ is $\beta_k$-satisfied in $\substr_{\domain^{\! \top}}$.
Let $\mathcal{Z} = \cup_{e\in \domain^{\bot}} \mathcal{Z}_e$ and let
\begin{equation*}
  \substr' = \substr \setminus \substr_{\domain^{\! \top}} \cup \mathcal{Z}.
\end{equation*}

\begin{example}
  \label{ex:domain_reduction}
  Consider the previous non-isolated graph example, the auxiliary sentence $\sentence'_G$ is as follows:
  \begin{align*}
    \tsentence_G = \ &\forall x \forall y: \neg E(x,x) \land (E(x,y) \rightarrow E(y,x)) \ \land \\ 
    &\forall x: Z(x) \rightarrow (\exists y: E(x,y)). 
  \end{align*}
  After the seven step in~\cref{fig:example_tree}, the resulting substructure $\substr'$ is $\{\neg E(v_2,v_2), \neg E(v_3,v_3), \neg Z(v_2), Z(v_3)\}$.
\end{example}

The original substructure decision problem can be reduced to another one of $\sentence'$ over the reduced domain $\domain^{\bot}$ with the new substructure $\substr'$:
\begin{lemma}
  \label{le:domain_reduction}
  Any $\subsat(\sentence, \domain, \substr)$ appearing in \cref{alg:recursive} is equivalent to the problem $\subsat(\sentence', \domain^{\bot}, \substr')$.
\end{lemma}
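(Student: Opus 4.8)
The plan is to prove that the two decision problems have the same answer by exhibiting a two-way correspondence between their witnessing models. Since $\subsat(\sentence,\domain,\substr)$ answers ``yes'' exactly when some model of $\sentence$ over $\domain$ contains $\substr$, and likewise $\subsat(\sentence',\domain^{\bot},\substr')$ answers ``yes'' exactly when some model of $\sentence'$ over $\domain^{\bot}$ contains $\substr'$, it suffices to show the equivalence at the level of models: there is a model $\fomodel\models\sentence$ over $\domain$ with $\fomodel\supseteq\substr$ if and only if there is a model $\fomodel'\models\sentence'$ over $\domain^{\bot}$ with $\fomodel'\supseteq\substr'$. I would prove each implication by an explicit surgery on structures. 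For ``$\Rightarrow$'' I restrict $\fomodel$ to the subdomain $\domain^{\bot}$ and overlay the Tseitin interpretation $\mathcal{Z}$; for ``$\Leftarrow$'' I strip the Tseitin predicates from $\fomodel'$ and glue the fixed block $\substr_{\domain^{\top}}$ (the ground $1$-types of $\domain^{\top}$ together with the local substructures of all $\domain^{\top}$-elements) back on top, recovering a full structure over $\domain$.

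For the forward direction, set $\fomodel'=\fomodel|_{\domain^{\bot}}\cup\mathcal{Z}$. It contains $\substr'$ because $\substr\setminus\substr_{\domain^{\top}}$ involves only $\domain^{\bot}$-elements and $\mathcal{P}_\sentence$-predicates and is inherited from $\fomodel\supseteq\substr$, while $\mathcal{Z}$ is added by construction. The universal conjunct $\forall x\forall y\colon\phi(x,y)$ of $\sentence'$ is immediate, since $\phi$ is free of the $Z_k$ and $\fomodel|_{\domain^{\bot}}$ inherits it from $\fomodel$. The crux is the relativized conjunct $\forall x\colon Z_k(x)\rightarrow(\exists y\colon\beta_k(x,y))$: if $Z_k(e)$ is positive then, by the defining property of $\mathcal{Z}$, $e$ is \emph{not} $\beta_k$-satisfied in $\substr_{\domain^{\top}}$; because every $2$-type between $e$ and a $\domain^{\top}$-element is already fixed inside $\substr_{\domain^{\top}}$ and agrees with $\fomodel$, none of the $\beta_k$-witnesses that $\fomodel\models\sentence$ must provide for $e$ can lie in $\domain^{\top}$, so at least one lies in $\domain^{\bot}$ and survives the restriction.

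For the backward direction, set $\fomodel=(\fomodel'\setminus\mathcal{Z})\cup\substr_{\domain^{\top}}$. This is a complete structure over $\domain$ because $\substr_{\domain^{\top}}$ supplies exactly the $1$-types of $\domain^{\top}$ and every $2$-type touching $\domain^{\top}$, while $\fomodel'$ supplies the $1$-types and $2$-types of $\domain^{\bot}$; it contains $\substr$ by the same bookkeeping as above. To check $\fomodel\models\sentence$ I verify $\phi$ pairwise: on $\domain^{\bot}$-pairs it is inherited from $\fomodel'\models\sentence'$, and on every pair meeting $\domain^{\top}$ (as well as every diagonal pair) it holds because the algorithm only ever installs $1$-types and $2$-types that are \emph{compatible} with the realized types, and compatibility is precisely the entailment $\tau(e)\land\tau(e')\land\pi(e,e')\models\phi(e,e')$. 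For $\beta$-satisfaction, the $\domain^{\top}$-elements are already $\beta_k$-satisfied inside $\substr_{\domain^{\top}}\subseteq\fomodel$, and for $e\in\domain^{\bot}$ I split on $\mathcal{Z}$: a negative $Z_k(e)$ means a witness was retained in $\substr_{\domain^{\top}}$, while a positive $Z_k(e)$ forces $\fomodel'\models\exists y\colon\beta_k(e,y)$, whose witness lies in $\domain^{\bot}$ and persists in $\fomodel$.

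I expect the main obstacle to be the witness bookkeeping that justifies the Tseitin encoding: in the forward direction one must argue that a positive $Z_k(e)$ genuinely rules out \emph{all} $\domain^{\top}$-witnesses, so that restricting to $\domain^{\bot}$ cannot silently discard the only witness of $e$, and dually that the fixed block $\substr_{\domain^{\top}}$ truly discharges the witness obligations it absorbs when added back. The $\phi$-verification, by contrast, reduces to the already-established notion of compatibility and to inheritance under restriction, and should be routine.
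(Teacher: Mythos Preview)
Your proposal is correct and mirrors the paper's own proof almost exactly: both directions use the same model surgeries $\fomodel' = \fomodel|_{\domain^{\bot}} \cup \mathcal{Z}$ and $\fomodel = (\fomodel' \setminus \mathcal{Z}) \cup \substr_{\domain^{\top}}$, and the same case split on the sign of $Z_k(e)$ to verify $\beta$-satisfaction. Your treatment of $\phi$ in the backward direction (invoking compatibility of the installed types on pairs meeting $\domain^{\top}$) is in fact more explicit than the paper's, which dispatches that case with a single ``similarly''.
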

\begin{proof}
  ($\Rightarrow$) Suppose that $\subsat(\sentence, \domain, \substr)$ holds, and let $\fomodel_\sentence$ be any model of $\sentence$ over $\domain$ such that $\substr \subseteq \fomodel$.
  We construct a structure $\fomodel_{\sentence'}$ of $\sentence'$ over $\domain^{\! \bot}$:
  \begin{equation*}
    \fomodel_{\sentence'} = \fomodel_{\sentence} \setminus \substr_{\domain^{\! \top}} \cup \mathcal{Z}.
  \end{equation*}
  It is easy to check that $\substr'\subseteq \fomodel_{\sentence'}$.
  We show that $\fomodel_{\sentence'}\models \sentence'$.
  Clearly, $\fomodel_{\sentence'} \models \phi(x,y)$ and we only consider the existential part \cref{eq:enum_aux_formula_Z}.
  For any element $e\in \domain^{\! \bot}$, if $Z_k(e) \in \mathcal{Z}_e$, which means that $e$ is not $\beta_k$-satisfied in $\substr_{\domain^{\! \top}}$, then there exists a positive literal $\beta_k(e,e')$ in $\fomodel_{\sentence}\setminus\substr_{\domain^{\! \top}}$ for some $e'\in\domain^\bot$ (otherwise $\fomodel_{\sentence}$ would not be a model of $\sentence$), and thus $\exists y: \beta_k(e,y)$ holds in $\fomodel_{\sentence'}$.
  Otherwise, if $Z_k(e) \notin \mathcal{Z}_e$, then the implication in \cref{eq:enum_aux_formula_Z} is trivially satisfied.
  Therefore, $\fomodel_{\sentence'} \models \sentence'$, and hence $\subsat(\sentence', \domain^\bot, \substr')$ holds.

  ($\Leftarrow$) Conversely, suppose that $\subsat(\sentence', \domain^{\bot}, \substr')$ holds, and let $\fomodel_{\sentence'}$ be any model of $\sentence'$ over $\domain^{\bot}$ such that $\substr' \subseteq \fomodel_{\sentence'}$.
  We construct a structure of $\sentence$ over $\domain$:
  \begin{equation*}
    \fomodel_{\sentence} = \fomodel_{\sentence'} \cup \substr_{\domain^{\! \top}} \setminus \mathcal{Z}.
  \end{equation*}
  We show that $\fomodel_{\sentence}\models \sentence$.
  Similarly, $\fomodel_{\sentence} \models \phi(x,y)$ and we only need to consider the existential part, that is, whether all elements $e \in \domain$ are $\beta_k$-satisfied in $\fomodel_{\sentence}$ for all $k\in[m]$.
  The $\beta$-satisfactions of elements in $\domain^{\! \top}$ are already guaranteed by the construction of $\fomodel_{\sentence}$.
  For $e\in \domain^{\! \bot}$, there are two cases:
  \begin{itemize}
    \item If $Z_k(e) \in \mathcal{Z}_e$, there must exist a positive literal $\beta_k(e,e')$ in $\fomodel_{\sentence'}$ for some $e'$ (otherwise $\fomodel_{\sentence'}$ would not be a model of $\sentence'$), and thus $e$ is $\beta_k$-satisfied in $\fomodel_{\sentence}$;
    \item If $\neg Z_k(e) \in \mathcal{Z}_e$, there must exist a positive literal $\beta_k(e,e')$ in $\substr_{\domain^{\! \top}}$ for some $e'$ (by the definition of $\mathcal{Z}_e$), and thus $e$ is also $\beta_k$-satisfied in $\fomodel_{\sentence}$.
  \end{itemize}
  Therefore, $\fomodel_{\sentence} \models \sentence$. Finally, it is easy to check that $\substr \subseteq \fomodel_{\sentence}$, and thus $\subsat(\sentence, \domain, \substr)$ holds.
\end{proof}

\subsection{Unary Encoding of Binary Facts}
After reducing the domain $\domain$ to $\domain^{\bot}$ and converting $\substr$ to $\substr'$ as described above, all remaining binary literals in $\substr'$ are between the target element $\targete$ and the elements in $\domain^+$.
We now aim to represent these binary literals in $\substr'$ using some unary literals, thereby transforming $\substr'$ into a unary substructure.

To achieve this, we introduce several auxiliary unary predicates: the \emph{target predicate} $T/1$ to identify the target element, and a set of \emph{relation predicates} $\{R_\pi/1 \mid \pi \in B_\sentence\}$ to encode the 2-types between the target element and the elements in $\domain^+$.
Recall that $B_\sentence$ denotes the set of all 2-types in the sentence $\sentence$.
We then transform the sentence $\sentence'$ (\cref{eq:enum_aux_formula_phi,eq:enum_aux_formula_Z}) to another auxiliary sentence $\tsentence$:
\begin{align}
  \tsentence = 
  & \ \sentence' \land \bigwedge_{\pi\in B_\sentence} \forall x \forall y: T(x) \land R_\pi(y) \rightarrow \pi(x,y) \label{eq:enum_aux_sentence_AP} \land \\ 
  & \forall x: T(x) \rightarrow (\bigvee_{\pi\in B_\sentence} R_\pi(x)). \label{eq:enum_aux_sentence_A}
\end{align}

Let $\mathcal{T}$ be an interpretation of the target predicate $T$ over $\domain^{\bot}$ such that $T(e)$ is positive if and only if $e$ is the target element $\targete$.
Let $\mathcal{R}_e$ be an interpretation of the relation predicates $R_\pi$ over $e\in \domain^{\bot}$ such that:
\begin{itemize}
  \item If $e$ is the target element, the literal $R_\pi(e)$ is positive if and only if $\pi(e,e) \models \tau(e)$, where $\tau$ is the 1-type of $\targete$ determined in the unary substructure $\usub$.
  \item If $e \in \domain^+$, the literals $R_\pi(e)$ is positive if and only if $\pi$ is the 2-type of $\{e^*, e\}$ in $\substr'$;
  \item Otherwise, all literals $R_\pi(e)$ are negative.
\end{itemize}
Denote by $\mathcal{R} = \bigcup_{e\in \domain^{\bot}} \mathcal{R}_e$ the interpretation of the relation predicates over $\domain^{\bot}$.
We now construct a new substructure $\hat{\substr}$ from $\substr'$:
$$
\hat{\substr} = (\substr' \setminus \substr'_b) \cup \mathcal{T} \cup \mathcal{R}
$$
where $\substr'_b$ is the set of all binary literals in $\substr'$.
Note that $\hat{\substr}$ is a unary substructure of $\tsentence$ over $\domain^{\bot}$.

\begin{example}
  Following \cref{ex:domain_reduction}, the auxiliary sentence $\tsentence_G$ is as follows:
  \begin{align*}
    \tsentence_G = \ &\sentence'_G \ \land \ \forall x \forall y: T(x) \land R_{\pi_E}(y) \rightarrow \pi_E(x,y) \ \land  \\
    &\forall x \forall y: T(x) \land R_{\pi_{\neg E}}(y) \rightarrow \pi_{\neg E}(x,y)  \ \land \\
    &\forall x: T(x)  \rightarrow (R_{\pi_E}(x) \! \lor \! R_{\pi_{\neg E}}(x)). 
  \end{align*}
  After the seventh step in~\cref{fig:example_tree}, the substructure $\hat\substr$ is $\substr'$ $\cup$ $\{T(v_2), \neg T(v_3)\}$ $\cup$ $\{ \neg R_{\pi_{\neg E}}(v_2),$ $\neg R_{\pi_{E}}(v_2),$ $\neg R_{\pi_{\neg E}}(v_3),$ $\neg R_{\pi_{E}}(v_3)\}$.
\end{example}
Then we have the following lemma:
\begin{restatable}{lemma}{unaryencoding}
  \label{le:unary_encoding}
  Any problem $\subsat(\sentence', \domain^{\bot}, \substr')$ resulting from \cref{le:domain_reduction} is equivalent to the problem $\subsat(\tsentence, \domain^{\bot}, \hat{\substr})$.
\end{restatable}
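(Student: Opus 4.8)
The plan is to prove this equivalence by the same two-direction model-transfer argument already used for \cref{le:domain_reduction}, exploiting two structural facts. First, $\tsentence$ differs from $\sentence'$ only by the conjuncts \cref{eq:enum_aux_sentence_AP,eq:enum_aux_sentence_A} over the fresh unary predicates $T$ and $\{R_\pi\}_{\pi\in B_\sentence}$, so any structure restricted to the vocabulary of $\sentence'$ behaves identically for both sentences. Second, $\hat{\substr}$ differs from $\substr'$ only by deleting the binary block $\substr'_b$ (the 2-type literals between $\targete$ and $\domain^+$) and inserting the \emph{fully determined} interpretations $\mathcal{T}$ and $\mathcal{R}$. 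The conceptual heart of the argument is that the conjunct \cref{eq:enum_aux_sentence_AP}, together with $\mathcal{T}$ and $\mathcal{R}$, re-encodes exactly the deleted literals $\substr'_b$, so deleting them loses no information.

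For the ($\Rightarrow$) direction I would start from any model $\fomodel_{\sentence'}\models\sentence'$ over $\domain^{\bot}$ with $\substr'\subseteq\fomodel_{\sentence'}$ and extend it to $\fomodel_{\tsentence}=\fomodel_{\sentence'}\cup\mathcal{T}\cup\mathcal{R}$, simply adding the prescribed (total) interpretations of $T$ and the $R_\pi$. Since these predicates are fresh, $\fomodel_{\tsentence}\models\sentence'$ is immediate, and $\hat{\substr}\subseteq\fomodel_{\tsentence}$ holds because the non-binary part of $\substr'$ already lies in $\fomodel_{\sentence'}$ while $\mathcal{T},\mathcal{R}$ are inserted by construction. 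The only genuine verification is \cref{eq:enum_aux_sentence_AP,eq:enum_aux_sentence_A}. For \cref{eq:enum_aux_sentence_AP}, the antecedent $T(x)\land R_\pi(y)$ can hold only when $x=\targete$ and either $y\in\domain^+$ with $\pi$ the unique 2-type of $\{\targete,y\}$ recorded in $\substr'$, or $y=\targete$ with $\pi$ a diagonal 2-type consistent with $\tau(\targete)$; in both cases $\substr'\subseteq\fomodel_{\sentence'}$ (respectively the 1-type of $\targete$ fixed in $\usub$) already forces $\pi(\targete,y)$, so the implication holds, while for $y\in\domain^-$ all $R_\pi(y)$ are negative and the conjunct is vacuous. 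Conjunct \cref{eq:enum_aux_sentence_A} holds because $\targete$ realizes a valid 1-type, which guarantees at least one diagonal relation predicate is set positive in $\mathcal{R}_{\targete}$.

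For the ($\Leftarrow$) direction I would take any $\fomodel_{\tsentence}\models\tsentence$ over $\domain^{\bot}$ with $\hat{\substr}\subseteq\fomodel_{\tsentence}$ and define $\fomodel_{\sentence'}$ as the restriction of $\fomodel_{\tsentence}$ to the vocabulary of $\sentence'$ (equivalently, drop all $T$- and $R_\pi$-literals). Then $\fomodel_{\sentence'}\models\sentence'$ is immediate since $\tsentence$ entails $\sentence'$. It remains to show $\substr'\subseteq\fomodel_{\sentence'}$: the non-binary part of $\substr'$ survives verbatim because it lies in $\hat{\substr}$ over the $\sentence'$-vocabulary, and the deleted block $\substr'_b$ is recovered from \cref{eq:enum_aux_sentence_AP}. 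Indeed $T(\targete)$ and, for each $e\in\domain^+$, the unique $R_\pi(e)$ with $\pi$ the 2-type of $\{\targete,e\}$ in $\substr'$ all belong to $\hat{\substr}\subseteq\fomodel_{\tsentence}$, so the conjunct forces the entire ground 2-type $\pi(\targete,e)$ into $\fomodel_{\tsentence}$, i.e.\ precisely the literals of $\substr'_b$ for that pair. Hence $\substr'\subseteq\fomodel_{\sentence'}$ and $\subsat(\sentence',\domain^{\bot},\substr')$ holds.

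The step I expect to require the most care is the bookkeeping around the diagonal pair $y=\targete$ and the guarantee that exactly one relation predicate is positive on each element of $\{\targete\}\cup\domain^+$, so that \cref{eq:enum_aux_sentence_AP} pins down precisely one 2-type per pair rather than over- or under-constraining it. Verifying that $\mathcal{R}_{\targete}$ is consistent with the 1-type $\tau(\targete)$ fixed in $\usub$ -- so that \cref{eq:enum_aux_sentence_A} is genuinely satisfied and the recovered self-loop literals agree with the unary substructure -- is where the argument is most delicate, whereas the transfer of the off-diagonal binary literals is routine once the encoding is in place.
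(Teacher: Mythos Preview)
Your proposal is correct and follows essentially the same approach as the paper: both directions use the identical model constructions $\fomodel_{\tsentence}=\fomodel_{\sentence'}\cup\mathcal{T}\cup\mathcal{R}$ and its inverse restriction, with the same case analysis on pairs $(x,y)$ to verify \cref{eq:enum_aux_sentence_AP,eq:enum_aux_sentence_A}. Your $(\Leftarrow)$ direction is in fact slightly more explicit than the paper's, which simply asserts that $\substr'\subseteq\fomodel_{\sentence'}$ follows ``by the definition of $\mathcal{T}$ and $\mathcal{R}$ and the mutual exclusiveness of the 2-types'' without spelling out the recovery of $\substr'_b$ via \cref{eq:enum_aux_sentence_AP} as you do.
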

\begin{proof}
  ($\Rightarrow$) Suppose that $\subsat(\sentence', \domain^{\bot}, \substr')$ holds, and let $\fomodel_{\sentence'}$ be any model of $\sentence'$ over $\domain^{\bot}$ such that $\substr' \subseteq \fomodel_{\sentence'}$.
  We construct a structure of $\tsentence$ by adding the unary facts to $\fomodel_{\sentence'}$:
  \begin{equation*}
    \fomodel_{\tsentence} = \fomodel_{\sentence'} \cup \mathcal{T} \cup \mathcal{R}.
  \end{equation*}
  We show that $\fomodel_{\tsentence}\models \tsentence$.
  Clearly, $\fomodel_{\tsentence} \models \sentence'$.
  We only consider the two new parts in \cref{eq:enum_aux_sentence_AP} and \cref{eq:enum_aux_sentence_A}.
  We show that $\fomodel_{\tsentence}$ satisfies \cref{eq:enum_aux_sentence_AP} by checking every pair $\{e, e'\} \in \domain^{\! \bot}$:
  \begin{itemize}
    \item If neither of $e$ and $e'$ is the target element, then $T(x) $ is false, and \cref{eq:enum_aux_sentence_AP} holds;
    \item If both are the target element $\targete$, then the unique 2-type $\pi$ with $\pi(e^*,e^*) = \tau(e^*)$ ensures \cref{eq:enum_aux_sentence_AP} holds;
    \item If exactly one is the target element, without loss of generality, suppose it is $e$:
    \begin{itemize}
      \item If the other element $e'$ is in $\domain^+$, then the 2-type $\pi$ between them corresponds to the unique positive $P_\pi(e')$. 
      Hence, \cref{eq:enum_aux_sentence_AP} holds;
      \item If $e'$ is in $\domain^-$, then all literals $R_\pi(e')$ are negative, and \cref{eq:enum_aux_sentence_AP} still holds.
    \end{itemize}
  \end{itemize}
  Then, for \cref{eq:enum_aux_sentence_A}, we show that it also holds for any $e\in \domain^{\! \bot}$:
  \begin{itemize}
    \item If $e$ is not the target element, then $T(e)$ is negative and~\cref{eq:enum_aux_sentence_A} holds.
    \item If $e$ is the target element, then there is a unique positive literal $R_\pi(e)$ such that $\pi(e,e) = \tau(e)$, and \cref{eq:enum_aux_sentence_A} holds. 
  \end{itemize}
  Thus, $\fomodel_{\tsentence} \models \tsentence$, combining with the fact $\hat{\substr} \subseteq \fomodel_{\tsentence}$, leads to the ``yes'' answer to the problem $\subsat(\tsentence, \domain^{\bot}, \hat{\substr})$.

  ($\Leftarrow$) Conversely, suppose that $\subsat(\tsentence, \domain^{\bot}, \hat{\substr})$ holds, and let $\fomodel_{\tsentence}$ be any model of $\tsentence$ over $\domain^{\bot}$ such that $\hat{\substr} \subseteq \fomodel_{\tsentence}$.
  We construct a structure of $\sentence'$ over $\domain^{\bot}$:
  \begin{equation*}
    \fomodel_{\sentence'} = \fomodel_{\tsentence} \setminus \mathcal{T} \setminus \mathcal{R}.
  \end{equation*}
  Clearly, the difference between $\fomodel_{\tsentence}$ and $\fomodel_{\sentence'}$ is the literals of the target predicate and relation predicates, which do not affect the satisfiability of the sentence $\sentence'$.
  Therefore, $\fomodel_{\sentence'} \models \sentence'$.
  The inclusion $\substr' \subseteq \fomodel_{\sentence'}$ is guaranteed by the definition of $\mathcal{T}$ and $\mathcal{R}$ and the mutual exclusiveness of the 2-types.
\end{proof}

Note that the construction of the problem $\subsat(\tsentence, \domain^{\bot}, \hat{\substr})$ is not unique.
We provide a more efficient alternative in the appendix.

\subsection{From $\subsat$ to $\confsat$}
Following the previous transformations, the substructure decision problem has been reduced to a new problem $\subsat(\tsentence, \domain^{\bot}, \hat{\substr})$, where $\hat{\substr}$ is a \emph{unary} substructure.
Such a problem can be directly reduced to a configuration decision problem.
Let $\tvecn$ be the configuration of $\tsentence$ constructed from the unary substructure $\hat{\substr}$, we have the following lemma.
\begin{lemma}
  \label{le:subsat_to_confsat}
  Any problem $\subsat(\tsentence, \domain^{\bot}, \hat{\substr})$ resulting from \cref{le:unary_encoding} is equivalent to $\confsat(\tsentence, \tvecn)$.
\end{lemma}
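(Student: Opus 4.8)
The plan is to exploit the fact that a unary substructure over $\domain^{\bot}$ fixes the 1-type of every element, so that $\hat{\substr}$ and the configuration $\tvecn$ carry exactly the same information up to a relabelling of the domain. Concretely, since $\hat{\substr}$ is a unary substructure of $\tsentence$ over $\domain^{\bot}$, it is a union of ground 1-types assigning to each $e\in\domain^{\bot}$ exactly one 1-type of $\tsentence$, and $\tvecn$ is by construction the vector whose entry for each 1-type $\tau$ counts the elements of $\domain^{\bot}$ realising $\tau$ in $\hat{\substr}$. The forward direction is then almost immediate, and the reverse direction rests on the isomorphism-invariance of first-order satisfaction.

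For the ($\Rightarrow$) direction I would suppose $\subsat(\tsentence,\domain^{\bot},\hat{\substr})$ holds and take a model $\fomodel$ of $\tsentence$ over $\domain^{\bot}$ with $\hat{\substr}\subseteq\fomodel$. Because $\hat{\substr}$ already pins down the 1-type of every element of $\domain^{\bot}$, and every element realises exactly one 1-type, the 1-type realised by each element in $\fomodel$ coincides with the one prescribed by $\hat{\substr}$. Hence the configuration of $\fomodel$ is exactly $\tvecn$, so $\tvecn$ is satisfiable and $\confsat(\tsentence,\tvecn)$ holds.

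For the ($\Leftarrow$) direction I would suppose $\confsat(\tsentence,\tvecn)$ holds and take a model $\fomodel'$ of $\tsentence$ whose configuration is $\tvecn$. Then for every 1-type $\tau$ the number of elements of $\fomodel'$ realising $\tau$ equals the corresponding entry of $\tvecn$, which is precisely the number of elements realising $\tau$ in $\hat{\substr}$. These matching per-1-type multiplicities guarantee the existence of a bijection from the domain of $\fomodel'$ to $\domain^{\bot}$ that sends each element to one realising the same 1-type in $\hat{\substr}$. Transporting $\fomodel'$ along this bijection yields a structure $\fomodel$ over $\domain^{\bot}$; since first-order satisfaction is invariant under isomorphisms of structures, $\fomodel\models\tsentence$. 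By the choice of the bijection, each element of $\domain^{\bot}$ realises in $\fomodel$ exactly the 1-type prescribed by $\hat{\substr}$, so $\hat{\substr}\subseteq\fomodel$, which gives $\subsat(\tsentence,\domain^{\bot},\hat{\substr})$.

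The main obstacle — in fact the only nontrivial point — is the relabelling step in the reverse direction: one must justify that renaming the domain elements of $\fomodel'$ preserves the property of being a model of $\tsentence$. This is the standard isomorphism-invariance of first-order logic, but the care lies in exhibiting a \emph{1-type-preserving} bijection, which reduces to the equality of the per-1-type cardinalities in $\fomodel'$ and in $\hat{\substr}$, both being the entries of $\tvecn$. Once this bijection is in place, the equivalence of the two decision problems follows immediately.
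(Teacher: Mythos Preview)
Your proposal is correct and is precisely the natural unpacking of the paper's own proof, which consists of the single line ``The proof is straightforward.'' The two directions you give (reading off the configuration from a model extending $\hat{\substr}$, and conversely relabelling a model with configuration $\tvecn$ via a 1-type-preserving bijection) are exactly what that sentence is gesturing at, so there is nothing to add.
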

\begin{proof}
  The proof is straightforward.
\end{proof}

Combining \cref{le:domain_reduction}, \cref{le:unary_encoding} and \cref{le:subsat_to_confsat}, we have the main conclusion of this section.
\begin{restatable}{proposition}{reducetoconfigurationdecision}
  \label{lem:reduce_to_configuration}
    Any substructure decision problem $\subsat(\sentence, \domain, \substr)$ appearing in \cref{alg:recursive} is equivalent to the configuration decision problem $\confsat(\tsentence, \tvecn)$, where $\tsentence$ and $\tvecn$ are constructed as above.
\end{restatable}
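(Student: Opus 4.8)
The plan is to establish \cref{lem:reduce_to_configuration} purely by composition: the three lemmas \cref{le:domain_reduction}, \cref{le:unary_encoding}, and \cref{le:subsat_to_confsat} already form a chain of equivalences, so the remaining work is to verify that they chain legitimately, i.e.\ that the output of one lemma is a valid input to the next. First I would observe that \cref{le:domain_reduction} reduces the original problem $\subsat(\sentence, \domain, \substr)$ to $\subsat(\sentence', \domain^{\bot}, \substr')$, where $\substr'$ is the substructure obtained by stripping the fully-determined part $\substr_{\domain^{\! \top}}$ and inserting the Tseitin interpretation $\mathcal{Z}$. Crucially, \cref{le:unary_encoding} is stated for exactly such a problem ``resulting from \cref{le:domain_reduction},'' so no adaptation is needed: it further reduces to $\subsat(\tsentence, \domain^{\bot}, \hat{\substr})$ with $\hat{\substr}$ a \emph{unary} substructure over the same reduced domain $\domain^{\bot}$. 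Finally \cref{le:subsat_to_confsat} consumes this unary substructure and yields $\confsat(\tsentence, \tvecn)$, where $\tvecn$ is the configuration read off from $\hat{\substr}$.

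The key steps, in order, are therefore: (1) invoke \cref{le:domain_reduction} to get $\subsat(\sentence, \domain, \substr) \Leftrightarrow \subsat(\sentence', \domain^{\bot}, \substr')$; (2) invoke \cref{le:unary_encoding} to get $\subsat(\sentence', \domain^{\bot}, \substr') \Leftrightarrow \subsat(\tsentence, \domain^{\bot}, \hat{\substr})$; (3) invoke \cref{le:subsat_to_confsat} to get $\subsat(\tsentence, \domain^{\bot}, \hat{\substr}) \Leftrightarrow \confsat(\tsentence, \tvecn)$; and (4) conclude by transitivity of logical equivalence. I would also remark that every substructure $\substr$ checked for consistency in \cref{alg:recursive} has the canonical form of \cref{eq:substructure_format} (as established in \cref{lem:enumeration}), which is precisely the hypothesis under which \cref{le:domain_reduction} applies, so the proposition's quantifier ``any substructure decision problem appearing in \cref{alg:recursive}'' is correctly scoped.

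I do not expect a genuine mathematical obstacle here, since the substantive content lives in the three component lemmas. The only point requiring care is \emph{compatibility of intermediate objects across the reductions}: one must confirm that $\substr'$ as produced by the construction preceding \cref{le:domain_reduction} is the same $\substr'$ that \cref{le:unary_encoding} takes as input, and likewise that $\hat{\substr}$ and $\tvecn$ are the objects defined just before \cref{le:subsat_to_confsat}. Because the paper has consciously phrased each lemma to accept the output of its predecessor (``resulting from \cref{le:domain_reduction}'', ``resulting from \cref{le:unary_encoding}''), this bookkeeping is immediate, and the proof reduces to a single sentence asserting the transitive composition. Thus the entire argument is:
\[
\subsat(\sentence, \domain, \substr) \;\Leftrightarrow\; \subsat(\sentence', \domain^{\bot}, \substr') \;\Leftrightarrow\; \subsat(\tsentence, \domain^{\bot}, \hat{\substr}) \;\Leftrightarrow\; \confsat(\tsentence, \tvecn),
\]
with the three equivalences supplied by \cref{le:domain_reduction}, \cref{le:unary_encoding}, and \cref{le:subsat_to_confsat} respectively.
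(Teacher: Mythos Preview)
Your proposal is correct and matches the paper's approach exactly: the paper introduces \cref{lem:reduce_to_configuration} with the sentence ``Combining \cref{le:domain_reduction}, \cref{le:unary_encoding} and \cref{le:subsat_to_confsat}, we have the main conclusion of this section'' and gives no further argument. Your chain of equivalences and the bookkeeping remark about the canonical form \cref{eq:substructure_format} are precisely what the paper relies on implicitly.
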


\subsection{Revisiting \cref{alg:recursive}}

By \cref{lem:reduce_to_configuration}, we can check the consistency of a substructure appearing in \cref{alg:recursive} (\cref{line:sat_check}) by solving its corresponding configuration decision problem, which has constant-time complexity, i.e., $\logO(1)$.
However, naively reconstructing the auxiliary configuration $\tvecn$ based on $\substr$ for every consistency check incurs a cost of $\logO(n^2)$, as $\substr$ contains $O(n^2)$ ground literals, which is not efficient.
Observe that $\substr$ is updated incrementally during \cref{alg:recursive}.
We can perform a synchronous update of $\tvecn$ with $\substr$ to avoid this overhead.
The optimized algorithm is presented in \cref{alg:recursive_2}, which maintains a delay complexity of $\logO(n^2)$: 
\begin{restatable}{proposition}{delaycomplexity}
  \label{lem:delay_complexity_2}
  The delay complexity of \cref{alg:recursive_2} is $\logO(n^2)$.
\end{restatable}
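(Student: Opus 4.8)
The plan is to bound the delay by the number of recursion steps performed between two consecutive models and the cost of each step, and then to add the cost of writing out the model. First I would establish that the recursion tree explored by \cref{alg:recursive_2} has \emph{no dead ends}: by the early-stopping mechanism, every 2-type that passes the check at \cref{line:sat_check} keeps the substructure $\domain$-consistent, hence (as in \cref{lem:enumeration}) extendable to a full model. Consequently, during any forward (descending) phase each invocation of $\mathsf{PairRecursion}$ finds a valid 2-type after at most a constant number of trials, since the number of compatible 2-types is fixed once $\sentence$ is fixed. No effort is therefore wasted on branches that fail to yield a model.

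Next I would bound the number of steps in a single delay. Two consecutive models are consecutive leaves in the depth-first traversal; to pass from one to the next, the algorithm backtracks from the current leaf up to the branching call and then descends to the next leaf. Both the backtracking and the descending path have length at most the recursion depth, which is $\sum_{i=1}^{n}(n-i) = O(n^2)$, because each target element $\targete$ contributes at most $n-1$ pair-assignment calls of $\mathsf{PairRecursion}$. Hence only $O(n^2)$ such calls are touched, each trying a constant number of 2-types; this refines the earlier observation that building one model already takes $\logO(n^2)$ calls of $\mathsf{PairRecursion}$.

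Then I would bound the per-step cost. Each trial performs a consistency check $\subsat(\sentence,\domain,\substr\cup\pi(\targete,e))$, which by \cref{lem:reduce_to_configuration} is equivalent to a configuration decision problem $\confsat(\tsentence,\tvecn)$ and hence, by \cref{cor:configuration_decision}, is answered in $\logO(1)$. The point specific to \cref{alg:recursive_2} is that $\tvecn$ is never rebuilt from scratch: adding or removing a single ground 2-type $\pi(\targete,e)$ alters the $\tsentence$-$1$-type of exactly one element $e$ (through its relation predicates, and through its Tseitin predicate $Z_k$ when $e$ becomes newly $\beta_k$-satisfied), so the synchronous update of $\tvecn$ touches only two cardinalities and costs $\logO(1)$. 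Thus every step costs $\logO(1)$.

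Finally I would collect the remaining contributions. Each $\mathsf{DomainRecursion}$ transition to a new target element requires resetting (on descent) or restoring (on backtracking) the relation-predicate encoding of the $O(n)$ remaining elements, at cost $O(n)$; since a single backtrack-and-descend path crosses at most $O(n)$ such transitions, these updates contribute $O(n^2)$ in total. Writing out each produced model costs $O(n^2)$ as well, since it contains $\Theta(n^2)$ binary ground literals. Summing the $O(n^2)$ steps at $\logO(1)$ each with these $O(n^2)$ terms yields a delay of $\logO(n^2)$. I expect the main obstacle to lie in verifying that the synchronous maintenance of $\tvecn$ is correct and truly costs only $\logO(1)$ per touched ground literal — in particular that the $\beta$-satisfaction bookkeeping through the $Z_k$ predicates and the relation-predicate re-encoding at target transitions never silently trigger a full $O(n^2)$ recomputation of $\tvecn$ during backtracking.
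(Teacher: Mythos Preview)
Your proposal is correct and follows essentially the same approach as the paper: bound the number of $\mathsf{PairRecursion}$ calls between two consecutive outputs by $O(n^2)$, argue that the incremental maintenance of $\tvecn$ costs $\logO(1)$ per 2-type update, and account separately for the $O(n)$ cost of each target-element transition, of which there are $O(n)$ per delay. The paper's own proof is terser---it organises the argument around the four places where $\tvecn$ is touched (initialisation from $\usub$, choosing $\targete$, updating/recovering at each 2-type trial, and moving $\targete$ into $\domain^{\!\top}$)---whereas you additionally spell out the no-dead-ends property, the backtrack-then-descend structure of the DFS, and the $O(n^2)$ output cost; these are points the paper either states elsewhere or leaves implicit, so your write-up is, if anything, more complete.
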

\begin{proof}
  The configuration $\tvecn$ will change in four cases:
\begin{enumerate}
  \item When we yield a unary substructure $\usub$, the configuration $\tvecn$ is initialized according to $\usub$ (\cref{line:init_tvecn}).
  \item When we choose a target element $\targete$ from the domain, the configuration $\tvecn$ will be updated, since the interpretation of the target predicate $A$ changes with respect to $\targete$ (\cref{line:update_tvecn_target}).
  \item When we do recursion and try to add a new 2-type to $\substr$ (\cref{line:iterate_2types}), the interpretation of the relation predicates (with respect to $e$) as well as the Tseitin predicates (with respect to $e$ and $\targete$) will be updated, and the configuration $\tvecn$ will be updated accordingly (\cref{line:update_tvecn_2type}). If the updated substructure is not consistent, the configuration $\tvecn$ will be recovered (\cref{line:recover_tvecn_recov}).
  \item Finally, when we determine all 2-types related to $\targete$, the element $\targete$ will be moved from $\domain^{\bot}$ to $\domain^{\top}$, and the configuration $\tvecn$ is updated accordingly (\cref{line:update_tvecn_domain}).
\end{enumerate}
The complexity of the step (2) and (3) is $\logO(1)$, while the complexity of the step (1) and (4) is $\logO(n)$.
However, the step (1) is not in the loop of yielding a binary substructure, and the step (4) only occurs $n-1$ times for enumerating a binary substructure.
Therefore, the total delay complexity of \cref{alg:recursive_2} remains $\logO(n^2)$.
\end{proof}

Combining~\cref{lem:enumeration,lem:delay_complexity_2}, we immediately have that \cref{alg:recursive_2} is a quadratic-delay model enumeration algorithm for \fotwo{} sentences, and thus prove \cref{th:quadratic_delay}.

\begin{algorithm}[tbp]
  \caption{Model Enumeration for \fotwo{}}
  \label{alg:recursive_2}
  \textbf{Input:} An \fotwo{} sentence $\sentence$, a positive integer $n$ \\
  \textbf{Output:} Models of the sentence $\sentence$ over $\domain = \{e_1, \dots, e_n\}$
  \begin{algorithmic}[1]
    \State Set global variables $\sentence$ and $\domain \gets \{e_1, \dots, e_n\}$
    \State \textcolor{blue}{$\tsentence \gets$ The auxiliary sentence for consistency check}
    \For {$\usub$ $\gets$ $\mathsf{EnumUnarySubstructures}(\sentence, n$)}
      \State \textcolor{blue}{$\tvecn \gets $ Initialize the configuration of $\tsentence$ from $\usub$} \label{line:init_tvecn}
      \For {$\fomodel \gets \mathsf{DomainRecursion}(\domain, \usub, \textcolor{blue}{\tvecn})$}
        \State yield $\fomodel$
      \EndFor
    \EndFor

    \State

    \Function{}{}$\mathsf{DomainRecursion}(\domain^{\bot}, \substr, \textcolor{blue}{\tvecn})$
      \If {$|\domain^{\bot}| = 1$}
        \State yield $\substr$
        \State \Return
      \EndIf
      \State $\targete \gets$ Choose a target element from $\domain^{\bot}$
      \State \textcolor{blue}{Update $\tvecn$} \label{line:update_tvecn_target}
      \For {$(\substr, \textcolor{blue}{\tvecn'}) \gets \mathsf{PairRecursion}(\targete, \domain^{\bot}\! \setminus \! \{\targete\}, \substr, \textcolor{blue}{\tvecn})$}
        \State \textcolor{blue}{Update $\tvecn'$} \label{line:update_tvecn_domain}
        \State $\mathsf{DomainRecursion}(\domain^{\bot}\! \setminus \! \{\targete\}, \ \substr, \ \textcolor{blue}{\tvecn'})$
      \EndFor
    \EndFunction

    \State
    
    \Function{}{}$\mathsf{PairRecursion}(\targete, \domain^{\!-}, \substr, \textcolor{blue}{\tvecn}$)
      \If {$\domain^-$ is $\varnothing$}
        \State yield $(\substr, \textcolor{blue}{\tvecn})$ 
        \State \Return
      \EndIf
      \State $e \gets$ Choose an element from $\domain^{\!-}$
      \For {each compatible 2-type $\pi$ of pair $\{e^*, e\}$}  \label{line:iterate_2types}
        \State \textcolor{blue}{$op \gets$ update $\tvecn$} \label{line:update_tvecn_2type}
        \If {\textcolor{blue}{$\confsat(\tsentence, \tvecn)$}}
          \State $\mathsf{PairRecursion}(\targete, \domain^{\!-} \! \setminus \! \{e\}, \substr\cup\{\pi(\targete,e)\}, \textcolor{blue}{\tvecn})$ \label{line:recursive_inner_new}
        \EndIf
        \State \textcolor{blue}{recover $\tvecn$ from $op$} \label{line:recover_tvecn_recov}
      \EndFor
    \EndFunction
  \end{algorithmic}
\end{algorithm}

\section{A Generalization to \fotwo with Equality}
\label{sec:equality}

In this section, we extend the algorithm to the \fotwo{} sentences with the equality predicate.
The equality predicate $=$ is a special binary predicate that is always interpreted as the standard equality relation over the domain.
For convenience, we use $x=y$ to represent the atom $=\!(x,y)$ and $x \neq y$ to represent $\neg(x\!=\!y)$. 
For example, the formula 
\begin{equation}
\label{eq:at_most_one}
\forall x \forall y: \tau_i(x) \land \tau_i(y) \rightarrow x=y
\end{equation}
means that the 1-type $\tau_i$ contains at most one element.
Any \fotwo{} sentence with the equality predicate can also be transformed into an SNF (\cref{eq:snf}) using the same technique by viewing the equality predicate as a normal binary predicate.\footnote{Note that when determining whether a type is compatible, one must also account for the semantics of the equality predicate in addition to the condition defined above.
For example, a 1-type that contains the literal $x \neq x$ cannot be considered compatible. }
W.l.o.g., we assume that the equality predicate only appears in the universally quantified formula $\phi(x,y)$.

Following the same idea as the case without the equality predicate, we first consider the configuration decision problem for the \fotwo{} sentences with equality.
The partial order of configurations is redefined as:
\begin{definition}[Derivation Relation for Equality]
  \label{def:configuration_derivation_with_equality}
  Consider an \fotwo{} sentence $\sentence$ with equality and its configuration $\vecn = (n_1, \dots, n_{|U|})$.
  We say that we can derive a configuration $\vecn' = (n_1', \dots, n_{|U|}')$ from $\vecn$ if
  \begin{align*}
    \left\{
    \begin{aligned}
    &n_i' \ge n_i, \ \text{if } n_i > 1\\
    &n_i' = n_i, \ \text{otherwise}
    \end{aligned}
    \right.
  \end{align*}
  for all $i \in [|U|]$.
\end{definition}
The only difference from the original~\cref{def:configuration_derivation} is that the condition $n_i > 0$ is replaced by $n_i > 1$.
We give an intuitive explanation of why the condition $n_i > 0$ is replaced by $n_i > 1$.
Suppose that $n_i = 1$ and the 1-type $\tau_i$ is restricted to contain at most one element as shown in~\cref{eq:at_most_one}.
When we derive a configuration by increasing the cardinality of $\tau_i$ (i.e., adding a new element to $\tau_i$), then the derived configuration must be unsatisfiable, since there exists no 2-type compatible with the pair $\{\tau_i, \tau_i\}$.
Thus the monotonicity property of the configuration is violated.

However, with the new derivation relation, we can still have \cref{le:configuration_monotonicity} for the \fotwo{} sentences with equality, and the proof is analogous to the case without equality predicates.
Specifically, when extending a model by adding a new element $e$ to a 1-type $\tau_i$ with $n_i > 1$ (as in \cref{sec:foundational_lemmas}), we assign 2-types between $e$ and $\domain \setminus \{e'\}$ by copying the 2-types of an existing element $e'$ in $\tau_i$.
The condition $n_i > 1$ ensures that a 2-type compatible with $\{\tau_i, \tau_i\}$ already exists, thereby guaranteeing the existence of a compatible 2-type between $e$ and $e'$.
In addition, \cref{le:upper_bound} also holds for the \fotwo{} sentences with equality, because the presence of the equality predicate does not affect the original proof.


Consequently, Theorem~\ref{th:1-type_configuration_satisfiability} and Corollary~\ref{cor:configuration_decision} are valid for the \fotwo{} sentences with equality.
Then we can enumerate unary substructures of the sentence $\sentence$ in exactly the same way as before.

Next, recall the process of enumerating binary substructures in Section~\ref{sec:binary_substructure}.
Clearly, the equality predicate does not affect the construction of the binary substructure, where we only care about 2-types between distinct elements.
Therefore, we can directly apply  \cref{alg:recursive_2} for the model enumeration of an \fotwo{} sentence with equality, which proves \cref{th:quadratic_delay_equality}.

\section{Conclusion and Discussion}

In this paper, we establish the quadratic delay complexity of model enumeration for \fotwo{} by introducing a novel algorithm. While the result has practical implications for various applications, such as exhaustive verification and combinatorial structure generation, we emphasize its theoretical significance.

The quadratic delay complexity demonstrates that the model enumeration problem for \fotwo{} is fixed-parameter tractable. This highlights that, at present, model enumeration for \fotwo{} can be far more efficient than its counterparts of model counting and sampling.
We anticipate that this positive result will stimulate more research on the fixed-parameter tractability of model counting and sampling problems.

Additionally, our findings on the configuration decision problem may be of independent interest.
In fact, the configuration decision problem considered in this paper corresponds to the many-sorted spectrum decision problem~\cite{DBLP:journals/jsyml/FischerM04} for many-sorted first-order logic, where the 1-types represent the sorts.
We demonstrate that the well-known MS and ESM properties of \fotwo{} can be generalized to its many-sorted fragment.
This generalization might not be the only case for decidable fragments of first-order logic.
For instance, the spectrum and many-sorted spectrum of \ctwo{} fragment (\fotwo{} with counting quantifiers ($\exists_{=k}, \exists_{\geq k}, \exists_{\leq k}$)) have been also shown to share a similar generalization, both exhibit subtle relationships to the regular graphs.

Extending our enumeration algorithm to more expressive fragments, such as those that have been proven to be tractable for model counting and sampling, while maintaining fixed-parameter tractability is not trivial.
The main barrier lies in the configuration decision problem. 
While the optimized domain recursion technique could still apply, given that all these known tractable fragments are extensions of \fotwo{}, proving the fixed-parameter tractability of model enumeration for these fragments requires solving the configuration decision problem, at least within fixed-parameter polynomial time.

\bibliographystyle{./IEEEtran}
\bibliography{./IEEEabrv,./IEEEexample}

\clearpage

\section*{Appendix}

\subsection{Ommited Proofs in \cref{sec:foundational_lemmas}}

\configurationmonotonicity*
\begin{proof}
  Let $\vecn = (n_1, \dots, n_{|U|})$ be a satisfiable configuration of $\sentence$.
  Since $\vecn$ is satisfiable, there exists a model \fomodel of $\sentence$ such that its configuration is $\vecn$.
  We prove this proposition by constructing a new model $\fomodel'$ of $\sentence$ for any derived configuration $\vecn'$ from $\vecn$ such that the configuration of $\fomodel'$ is $\vecn'$.

  We first reindex the 1-types such that $n_1, \dots, n_d > 0$ and $n_{d+1}, \dots, n_{|U|} = 0$.
  Then this proposition can be restated as follows: if a configuration $\vecn = (n_1, \dots, n_{|U|})$ is satisfiable, then all configurations $(n_1 + l_1, \dots, n_d + l_d, n_{d+1}, \dots, n_{|U|})$, where $l_i \ge 0$ for $i \in [d]$, are also satisfiable.

  We prove it by induction on $l_1, \dots, l_d$.
  The base case where $l_1 = \dots = l_d = 0$ is trivial.
  Suppose that the lemma holds for $k_1, \dots, k_d$, we will show that it also holds for the case where $k_t$ is replaced by $k_t + 1$ for some $t \in [d]$.
  Given a model \fomodel of $\sentence$ under the configuration $\vecn$, we extend the model to $\fomodel'$ by adding a new element $e$ that realizes the 1-type $\tau_t$.
  The 2-types between the new element and other elements are constructed by:
  \begin{enumerate}
    \item picking another element $e'$ realizing $\tau_t$ in \fomodel;
    \item copying the 2-types between $e'$ and other elements in \fomodel to the new element $e$;
    \item and assign the 2-type $\pi$ between $e$ and $e'$ such that $\pi$ is compatible with the 1-type $\tau_t$.
  \end{enumerate}
  It is easy to check the above construction is valid in the sense that we can always find $e'$ by the condition $n_i > 0$, and there is always a compatible 2-type $\pi$ (note that we are not considering the equality predicate here).
  Next, we show that $\fomodel'$ is a model of $\sentence$.
  Let $\domain$ be the domain of \fomodel, and let $\domain'$ be $\domain \cup \{e\}$.
  First, it is easy to check that $\forall x\exists y: \beta_k(x,y)$ for all $k\in[m]$ is satisfied in $\fomodel'$ since 
  \begin{itemize}
    \item the 2-types within $\domain$, which are the same as that in \fomodel, satisfy the formula, and 
    \item the new element $e$ satisfies the formula by the construction step 2.
  \end{itemize}
  For the universal formula $\forall x \forall y: \phi(x,y)$, we ground it to $\domain'$: $\bigwedge_{a, b\in\domain'} \phi(a,b)$.
  We prove its satisfaction by cases:
  \begin{itemize}
    \item If $a, b \in \domain$, then $\phi(a,b)$ holds in \fomodel and thus holds in $\fomodel'$.
    \item If $a = e$ and $b \in \domain\setminus\{e'\}$, then $\phi(e,b)$ is satisfied in $\fomodel'$ by the construction step 2.
    \item If $a = e$ and $b = e'$, then $\phi(e,e')$ is satisfied in $\fomodel'$ by the construction step 3.
  \end{itemize}
  Thus, $\fomodel'$ is a model of $\sentence$, which completes the proof.
\end{proof}

\upperbound*

Recall the notations defined in the main text.
For a given structure $\structure$, we say that an element $e$ is the \emph{(Skolem) witness} of an element $e'$ with respect to $\beta_k$ if the literal $\beta_k(e',e)$ holds in $\structure$.
The element $e$ is also called a \emph{$\beta_k$-witness} of $e'$.
Additionally, if a witness $e$ of $e'$ is in the same 1-type as $e'$, we refer to $e$ as an \emph{internal witness}, or otherwise, an \emph{external witness}.
An element $e$ is considered \emph{satisfied} in a structure $\structure$ if it has at least one witness for each predicate $\beta_k$.
We define the \emph{necessary witness set} of an element $e$ in $\structure$ as a set of witnesses for $e$ that meets the following conditions:
\begin{itemize}
  \item \textbf{Coverage}: for each predicate $\beta_k$, there is at least one $\beta_k$-witness of $e$ in this set.
  \item \textbf{Minimality}: removing any witness from this set results in a set that no longer satisfies the coverage condition.
\end{itemize}
Every element in a model has at least one necessary witness set. 
The size of any necessary witness set for an element is at most $m$, since at most one witness is needed for each $\beta_k$.

Now, we can prove the existence of the universal upper bound $\delta$ for the configuration decision problem.

\begin{proof}
  Without loss of generality, assume that the configuration $\vecn = (n_1, \dots, n_{|U|})$ with $n_1 \geq \delta$ is satisfiable (we can easily guarantee this by re-indexing the 1-types).
  Let \fomodel be a model of the sentence $\sentence$ with the configuration $\vecn$. 
  We will show that $\vecn[1 \mapsto \delta] = (\delta, \dots, n_{|U|})$ is also satisfiable by constructing a new model $\fomodel'$ with $\vecn'$.

  For convenience, we first obtain the necessary witness set for each element in \fomodel, and only consider the witnesses in the necessary witness sets.
  In the construction of $\fomodel'$, we will ensure that each element has at least one witness for each predicate $\beta_k$, following its necessary witness set in \fomodel.
  To distinguish the witnesses in $\fomodel$ and $\fomodel'$, we call the witnesses of an element $e$ in $\fomodel$ as \emph{original witnesses}, and the witnesses in $\fomodel'$ as \emph{new witnesses}.

  Given a structure $\structure$, we view a 1-type $\tau$ as a set of elements realizing $\tau$ in $\structure$.
  We initialize $\fomodel'$ by copying all elements in the 1-types $\tau_2, \dots, \tau_{|U|}$, along with the 2-types between them from \fomodel.
  At this point, $\fomodel'$ may not satisfy $\sentence$, since elements in $\tau_2, \dots, \tau_{|U|}$ may not satisfy $\forall x \exists y: \beta_k(x,y)$ due to the absence of their original witnesses in $\tau_1$.
  In the following, we construct a new set of elements, denoted $\tau_1'$, that realize $\tau_1$ in $\fomodel'$, as well as the 2-types between elements in $\tau_1'$ and other 1-types to ensure that every element is satisfied in $\fomodel'$ (i.e., has a necessary witness set).
  The high-level idea can be described as:
  \begin{enumerate}
    \item Determine 2-types between $\tau_1'$ and other 1-types, ensuring that every element in $\tau_2, \dots, \tau_{|U|}$ is satisfied.
    \item Determine 2-types within $\tau_1'$, ensuring that every element in $\tau_1'$ is satisfied.
  \end{enumerate}
  \paragraph{Part 1}
  Let the set $T$ represent all elements in $\tau_2, \dots, \tau_{|U|}$.
  We create elements in $\tau_1'$ and determine their 2-types with elements in $T$:
  \begin{enumerate}
    \item Choose $m$ arbitrary elements from $\tau_1$ in the original model \fomodel, and denote this set of elements as $S$. 
    \item For an element $e \in T$, find its original witnesses in $\tau_1$ and obtain the corresponding 2-types $\pi_1, \dots, \pi_d$ (note that $d \leq m$).
    For any original witness $a$ of $e$ that is in $S$, maintain the 2-types between $e$ and $a$ in $\fomodel'$.
    For the remaining original witnesses, redirect their 2-types to the remaining elements in $S$ such that for each $\pi_i$, there is an element $a \in S$ with $e$ realizing $\pi_i$.
    As a result, $e$ is satisfied in $\fomodel'$.
    Repeating this process for all elements in $T$ ensures that all elements in $T$ are satisfied.
    \item For each element $e \in S$, find its original witnesses in $T$, and retain their 2-types.
    Here, conflicts may arise since we have already determined 2-types between some elements in $S$ and $T$ during the previous step. 
    For example, if the original $\beta_k$-witness of $e$ is $b\in T$, while $e$ has been assigned as the $\beta_l$-witness of $b$, then we will encounter a conflict if $\beta_k(e,b)$ and $\beta_l(b,a)$ are different (recall that different 2-types are not consistent).
    We denote this conflict as $(e/\beta_k, b/\beta_l)$, which will be resolved by the next step.
    \item For each conflicting pair $(a/\beta_k, b/\beta_l)$ where $a\in S$ and $b\in T$, we resolve the conflict by recovering the 2-type between $a$ and $b$ from \fomodel (i.e., making $b$ the new $\beta_k$-witness of $a$), and adding the original $\beta_l$-witness of $b$ to a new set $S'$.
    Then for each element in $S'$, we directly copy their 2-types with elements in $T$ from the original model \fomodel.
    The combined elements of $S$ and $S'$ form the 1-type $\tau_1'$ in $\fomodel'$.
    \cref{fig:external-witness} provides an illustrative example of this resolution process.
  \end{enumerate}

  \begin{figure}
    \centering
    \includegraphics[width=0.45\textwidth]{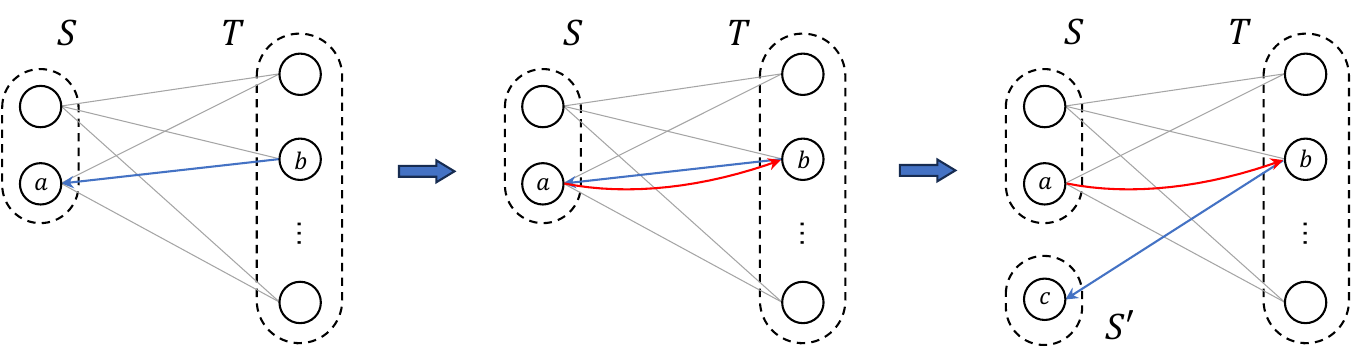}
    \caption{An example of $m=2$: Start by arbitrarily choosing $2$ elements to form $S$ and specifying them as witnesses for elements in $T$. 
    For instance, element $a$ is assigned as the new $\beta_1$-witness of $b$ (blue line) in the step 2.
    Then, when processing the witnesses of $a$, we may find element $b$ as the original $\beta_2$-witness of $a$ (red line) and $\beta_1(b,a)$ and $\beta_2(a,b)$ are different.
    We resolve the conflict by finding the original $\beta_1$-witness of $b$ and add it to $S'$.}
    \label{fig:external-witness}
  \end{figure}
  
  The above construction ensures that for any element $e$ in $T$, and for any original $\beta$-witness of $e$ in $\tau_1$, there is an element in $\tau_1'$ that serves as the new $\beta$-witness of $e$.
  Combined with the initialization of $\fomodel'$, all elements in $T$ are satisfied.
  Additionally, the steps $3-5$ guarantee that the original external witnesses of all elements in $\tau_1'$ are also preserved in $\fomodel'$: for any element $e$ in $\tau_1'$, and for any original external $\beta$-witness of $e$, there is an element in $T$ that serves as the new external $\beta$-witness of $e$.

  Regarding the number of elements in $\tau_1$: In the first step, we add $m$ elements to $S$, in the second step, there are at most $m^2$ conflicts, and in the third step, each conflict will add one element to $S'$. Thus, the number of elements in $\tau_1'$ is at most $m + m^2 = m(m+1)$.

  After the above construction, the number of elements in $\tau_1'$ may not exactly match $\delta$.
  For convenience in the next step, we use the same technique in Lemma \ref{le:configuration_monotonicity} to extend $\tau_1'$ such that $|\tau_1'| = \delta$.

  \paragraph{Part 2}
  We determine the 2-types between elements within $\tau_1'$.
  Let $e_1, \dots, e_{\delta}$ be the elements in $\tau_1'$.
  For each element $e_i$, we find its original internal witnesses, and denote their 2-types by $\{\pi_1, \dots, \pi_k\}$ (note that $k \leq m$).
  Then, we assign the 2-types between $e_i$ and the elements
  $$e_{(i+1)\bmod(2m+1)}, e_{(i+2)\bmod(2m+1)},\dots,e_{(i+k)\bmod(2m+1)}$$
  to be $\{\pi_1, \dots, \pi_k\}$ respectively.
  \cref{fig:internal-witness} provides an example.
  \begin{figure}[tbp]
    \centering
    \includegraphics[width=0.40\textwidth]{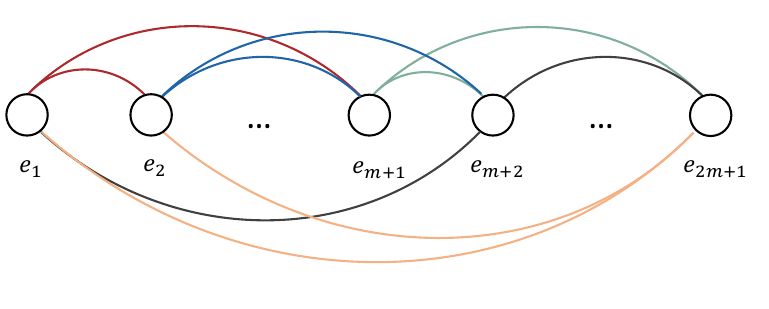}
    \caption{The $m$ new internal witnesses for $e_1$, $e_{m+1}$, $e_{m+2}$ and $e_{2m+1}$ are $\{e_2, e_3, \dots, e_{m+1}\}$, $\{e_{m+2}, \dots, e_{2m+1}\}$, $\{e_{m+3}, \dots, e_{2m+1}, e_1\}$ and $\{e_1, \dots, e_m\}$, respectively.}
    \label{fig:internal-witness}
  \end{figure}
  Then the remaining 2-types within $\tau_1'$ are arbitrarily determined to be compatible with $\tau_1'$.
  This part ensures that all internal witnesses of elements in $\tau_1'$ are preserved in $\fomodel'$, which combined with Part 1, guarantees that all elements in $\tau_1'$ are satisfied in $\fomodel'$.

  Finally, it is straightforward to verify that all elements in $\fomodel'$ satisfy $\forall x\forall y: \phi(x,y)$, since their 2-types are compatible with their respective 1-types.
  Putting all together, we have that $\fomodel'$ is a model of $\sentence$ with the configuration $\vecn'$, i.e., $\vecn'$ is satisfiable.
  The total number of elements in the new 1-type $\tau_1'$ is bounded by $\delta = \max\{m(m+1), 2m+1\}$.
\end{proof}

\subsection{A more efficient construction of the auxiliary sentence}

The new substructure decision problem $\subsat(\tsentence, \domain^{\bot}, \hat{\substr})$ constructed in \cref{le:unary_encoding} is intuitive but not efficient, since we added $|B_\sentence|$ relation predicates $\{P_\pi/1 \mid \pi\in B_\sentence\}$. 
This depends on the number of 2-types in the original sentence, which might be prohibitive in practice.

To this end, we employ a more efficient construction of the auxiliary sentence $\tsentence$ and substructure $\hat\substr$ to reduce the number of relation predicates.
We retain the same unary predicate $T/1$ to identify the target element. 
Instead of introducing $|B_\sentence|$ distinct predicates, we use only one additional unary predicate $R/1$ and one binary predicate $D/2$. 
The atom $R(x)$ indicates whether the 2-type between $x$ and the target element is determined. 
The atom $D(x, y)$ indicates whether either $x$ or $y$ is the target element and the 2-type between them is determined. 
In other words, $D(e_i, e_j)$ holds if and only if $e_i$ or $e_j$ is the target element and the 2-type between $e_i$ and $e_j$ is determined.
The auxiliary sentence is then given by:
\begin{align}
  \tsentence = \ & \sentence' \land \forall x \forall y: T(x) \land R(y) \rightarrow D(x,y) \land D(y,x) \ \land \label{eq:enum_aux_sentence_AP_appendix} \\
  &\forall x \forall y: \neg (T(x) \land R(y)) \land \neg (T(y) \land R(x)) \rightarrow \notag \\
  & \quad\quad\quad\quad\quad\quad\quad\quad\quad\quad \ \ \neg D(x,y) \land \neg D(y,x). \label{eq:enum_aux_sentence_nAP_appendix}
\end{align}
The interpretation of $T/1$ remains as before: $T(e)$ holds if and only if $e = e^*$. 
The predicate $R(e)$ holds if and only if the 2-type between $e$ and the target element is determined, i.e., $e \in {e^*} \cup \domain^{! +}$. 
The interpretation of $D/2$ is uniquely determined by the interpretations of $T$ and $R$ via the constraints in \cref{eq:enum_aux_sentence_AP_appendix,eq:enum_aux_sentence_nAP_appendix}: the literal $D(e, e')$ holds if and only if $T(e)$ and $R(e')$ hold, or $T(e')$ and $R(e)$ hold.

Let $\mathcal{T}, \mathcal{R}$ and $\mathcal{D}$ be the interpretations of the predicates $T, R$ and $D$ over $\domain^{\bot}$, respectively.
The substructure $\hat{\substr}$ is constructed as follows:
$$
\hat{\substr} = (\substr' \setminus \substr'_b) \cup \mathcal{T} \cup \mathcal{R}.
$$
Note that the interpretation of $D$ is not explicitly included in $\hat\substr$, as it is fully determined by $\mathcal{T}$ and $\mathcal{R}$. This ensures that $\hat\substr$ remains a unary substructure, as desired.

Then we proof the following lemma:
\unaryencoding*
\begin{proof}
  ($\Rightarrow$) Suppose that $\subsat(\sentence', \domain^{\bot}, \substr')$ holds, and let $\fomodel_{\sentence'}$ be any model of $\sentence'$ over $\domain^{\bot}$ such that $\substr' \subseteq \fomodel_{\sentence'}$.
  We construct a structure of $\tsentence$ by adding the new facts to $\fomodel_{\sentence'}$:
  \begin{equation*}
    \fomodel_{\tsentence} = \fomodel_{\sentence'} \cup \mathcal{T} \cup \mathcal{R} \cup \mathcal{D}.
  \end{equation*}
  Clearly, $\fomodel_{\tsentence} \models \sentence'$.
  The new parts in \cref{eq:enum_aux_sentence_AP_appendix,eq:enum_aux_sentence_nAP_appendix} only include the literals of the predicates $T$, $R$ and $D$, so $\fomodel_{\tsentence}$ satisfies them by the construction of $\mathcal{T}, \mathcal{R}$ and $\mathcal{D}$.
  Thus, $\fomodel_{\tsentence} \models \tsentence$, combining with the fact $\hat{\substr} \subseteq \fomodel_{\tsentence}$, leads to $\subsat(\tsentence, \domain^{\bot}, \hat{\substr})$ holds.

  ($\Leftarrow$) Conversely, suppose that $\subsat(\tsentence, \domain^{\bot}, \hat{\substr})$ holds, and let $\fomodel_{\tsentence}$ be any model of $\tsentence$ over $\domain^{\bot}$ such that $\hat{\substr} \subseteq \fomodel_{\tsentence}$.
  We construct a structure of $\sentence'$ over $\domain^{\bot}$:
  \begin{equation*}
    \fomodel_{\sentence'} = \fomodel_{\tsentence} \setminus ( \mathcal{T} \cup \mathcal{R} \cup \mathcal{D}).
  \end{equation*}
  Clearly, the difference between $\fomodel_{\tsentence}$ and $\fomodel_{\sentence'}$ is the literals of three new predicates, which do not affect the satisfiability of the sentence $\sentence'$.
  Therefore, $\fomodel_{\sentence'} \models \sentence'$.
  The inclusion $\substr' \subseteq \fomodel_{\sentence'}$ is guaranteed by the definition of $\mathcal{T}$ and $\mathcal{R}$ and the mutual exclusiveness of the 2-types.
\end{proof}

\end{document}